\documentclass[11pt]{article}
\usepackage[margin=1in]{geometry}
\usepackage[T1]{fontenc}
\usepackage{lmodern}
\usepackage{microtype}
\usepackage{amsmath,amssymb,amsthm,mathtools}
\usepackage{tikz}
\usepackage{subcaption}
\usepackage{enumitem}
\usepackage{authblk}
\usepackage{comment}
\usepackage[textsize=tiny]{todonotes}
\usepackage{xcolor}
\definecolor{linkblue}{RGB}{0,60,110}
\usepackage[colorlinks=true,allcolors=linkblue,hypertexnames=false]{hyperref}
\newtheorem{theorem}{Theorem}[section]
\newtheorem{lemma}[theorem]{Lemma}

\newtheorem{claim}[theorem]{Claim}
\theoremstyle{plain}
\newtheorem*{definition}{Definition}
\usepackage[nameinlink,noabbrev,capitalize]{cleveref}
\crefname{lemma}{lemma}{lemmas}
\Crefname{lemma}{Lemma}{Lemmas}
\crefname{proposition}{proposition}{propositions}
\Crefname{proposition}{Proposition}{Propositions}
\crefname{claim}{claim}{claims}
\Crefname{claim}{Claim}{Claims}
\crefname{appendix}{appendix}{appendices}
\Crefname{appendix}{Appendix}{Appendices}
\usepackage{thmtools,thm-restate}
\newif\ifrestatinglemma
\DeclareRobustCommand{\restatementprefix}{\ifrestatinglemma restated; \fi}
\newcommand{\poly}{\operatorname{poly}}
\newcommand{\dist}{\operatorname{dist}}
\DeclareMathOperator{\ret}{ret}
\DeclareMathOperator{\dom}{dom}
\newcommand{\E}{\mathbb E}
\newcommand{\Prb}{\mathbb P}
\newcommand{\B}{\mathsf B}

\DeclarePairedDelimiter{\card}{\vert}{\vert}

\DeclarePairedDelimiter{\abs}{|}{|}

\DeclarePairedDelimiter{\floor}{\lfloor}{\rfloor}

\DeclarePairedDelimiter{\set}{\{}{\}}

\newcommand{\eps}{\varepsilon}

\usepackage{xspace}
\newcommand{\LLL}{Lovász Local Lemma\xspace}

\title{LOCAL LLL in $\poly(\log\log n)$ Rounds}
\newcommand{\authorentry}[2]{%
  \begin{minipage}[t]{0.32\textwidth}
    \centering
    {\large #1\par}
    \vspace{2pt}
    {\small #2\par}
  \end{minipage}%
}

\author{%
  \begin{minipage}{0.86\textwidth}
    \centering
    \authorentry{Maxime Flin}{Aalto University}\hfill
    \authorentry{Mohsen Ghaffari}{MIT}\hfill
    \authorentry{Magn\'us M.~Halld\'orsson}{Reykjavik University}%
    \\
    \authorentry{Yannic Maus}{TU Graz}\hspace{0.02\textwidth}%
    \authorentry{Alexandre Nolin}{T\'el\'ecom SudParis}
  \end{minipage}%
}
\date{}

\begin{document}
\pagenumbering{gobble}
\maketitle

\vfill

\begin{abstract}
We provide a $\poly(\log\log n)$-round randomized LOCAL algorithm for the
constructive Lovász Local Lemma with polynomial criterion $pd^{c} < 1$,
for a sufficiently large constant $c$. The best previous general round complexity was
$\widetilde O(\log n)$. The key ingredient is a local
reduction that amplifies the LLL probability exponent quadratically, while
increasing the dependency radius only by a constant factor.

\par\vspace{\baselineskip}
\noindent\textbf{AI Statement:} The result was proved by a combination of GPT-6 Astra, GPT-6.1 Sol and Opus 5.5 with essentially no human advice. We have rewritten the paper nearly completely for human-centered exposition. We used additional AIs for review and copyediting. The authors take full responsibility for the manuscript's content and the correctness of results.
\end{abstract}

\thispagestyle{empty}

\vfill
\tableofcontents
\newpage
\pagenumbering{arabic}

\section{Introduction}\label{sec:introduction}
One of the central open problems in distributed graph algorithms is the
complexity of solving the Lov\'asz Local Lemma (LLL) problem.
The significance of this problem is in part due to its remarkably broad role
in the LOCAL model of distributed computing (besides its diverse range of
applications in combinatorics and sequential algorithms): as highlighted
by Chang and Pettie~\cite{CP19}, LLL provides an automatic
mechanism for reducing the failure probability of \emph{any} randomized
LOCAL algorithm for \emph{any} locally checkable problem, provided its
local failures satisfy a local union bound with sufficient slack.
The idea is to treat the algorithm's random bits as variables and local
violations as bad events. An LLL algorithm chooses these bits to avoid
every violation, yielding global success with high probability.

The state of the art on LLL's LOCAL round complexity has an exponential
gap between upper and lower bounds: under a polynomial criterion,
general algorithms take $\widetilde O(\log n)$ rounds~\cite{CPS17,Davies_soda23},
while the lower bound is $\Omega(\log\log n)$~\cite{BFHKLRSU_stoc16}.
For some special cases, faster
$\poly(\log\log n)$-round algorithms have been known: constant dependency
degree~\cite{FG17,RG20}, a stronger exponential
criterion~\cite{GHK18,RG20,Davies_soda23}, and applications such as defective
vertex coloring~\cite{FG17,RG20,HMN_disc22,Davies_soda23}, near-optimal edge coloring~\cite{HMN_disc22,Davies_soda23} and optimal vertex coloring \cite{FHJM26}.

In this paper, we solve the general variable LLL under a polynomial criterion
in $\poly(\log\log n)$ rounds. Before stating the formal theorem, let us
recall the model and the problem.

\paragraph{The LOCAL model.}
In the LOCAL model~\cite{linial1992locality}, the network is abstracted as a graph with at most $n$ nodes, with one processor at each node, having its own unique identifier
and independent random bits. Nodes know the common size upper bound $n$,
which is polynomial in the actual network size. Communication proceeds in synchronous rounds:
in each round, each processor can send an arbitrarily large message to each
neighbor and perform unrestricted local computation. The primary measure
of interest is round complexity, the number of rounds until all nodes have
terminated.

\paragraph{The LLL problem.}
Let $\mathcal X$ be a collection of independent random variables and
$\{A_v\}_{v\in V(G)}$ a family of at most $n$ bad events, with $A_v$ determined by
variables $V_v$. A dependency graph $G$ contains an edge between every pair of distinct
events sharing a variable.
Let $d\ge2$ bound its maximum degree and $\Prb[A_v]\le p$ for every $v$.
Nodes also know these upper bounds $p$ and $d$.
The LLL~\cite{EL74} guarantees an assignment avoiding all events when
$ep(d+1)\le1$. In the LOCAL formulation, we communicate on $G$:
vertex $v$ knows $A_v$, its variables and their distributions, and must
output values for $V_v$ that agree with other vertices' outputs and avoid
$A_v$. For distributed algorithms, we focus on a \emph{polynomial criterion},
$pd^c<1$, for a sufficiently large absolute constant $c$.
Such slack suffices for most standard distributed applications, sometimes
through a reformulation, and the precise constant $c$ is usually
secondary~\cite{CPS17,CP19,FG17,GHK18,RG20,Davies_soda23}.
A stronger \emph{exponential criterion}, $p\le2^{-Cd}$, has previously
enabled faster algorithms.

We now state our main result.

\begin{theorem}\label{thm:main}
An $n$-event variable LLL with dependency degree at most $d\ge2$ and
event probabilities at most $p$ satisfying $pd^{1000}<1$ can be solved
in $\poly(\log\log n)$ randomized LOCAL rounds with success probability
at least $1-n^{-\gamma}$, for every constant $\gamma>0$.
\end{theorem}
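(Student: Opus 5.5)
The plan is to repeatedly apply a local \emph{amplification} reduction until the instance satisfies an exponential-type criterion. Then I finish with a known fast algorithm for that regime, such as the $\poly(\log\log n)$-round algorithms of~\cite{GHK18,RG20,Davies_soda23} for $p\le 2^{-Cd}$, or for constant degree via~\cite{FG17}. The abstract names the key ingredient as a local reduction that amplifies the probability exponent quadratically while growing the dependency radius only by a constant factor.

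Concretely, write the current criterion as $p\le d^{-k}$, starting from $k=1000$. A single amplification step should produce, in $O(r)$ rounds where $r$ is the current dependency radius, a new variable LLL instance whose events are supported on $O(1)$-radius balls of the old dependency graph. The new instance should have dependency degree $d'\le d^{O(1)}$ and probability $p'\le d^{-\Omega(k^2)}$ relative to $d$, which is roughly $p'\le (d')^{-\Omega(k)}\cdot$(quadratic gain). Any solution of the new instance should yield a solution of the old one.

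The natural construction is a partial resampling (shattering-style) step:
\begin{enumerate}[label=(\roman*)]
\item Sample all variables.
\item Unset the variables of violated events and their neighbours.
\item Define the new bad event at $v$ as ``$v$'s neighbourhood ends up in a configuration whose conditional probability of completing badly is too large.''
\end{enumerate}
Chaining two independent layers makes the failure probability square. The dependency structure stays within distance $O(1)$ of the original, and the remaining conditional LLL instance still satisfies a polynomial criterion, since events were left unset only when their conditional probability was at most $\sqrt p$.

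After $t$ iterations the exponent is about $k^{2^t}$ and the radius is $2^{O(t)}$. Taking $t=O(\log\log\log n)$ already gives $p\le n^{-\Omega(1)}$ in terms of the blown-up degree, or equivalently $p\le 2^{-C\,d_{\text{eff}}}$ once $\log(1/p)$ exceeds the effective degree $d^{2^{O(t)}}$. The radius is then $\poly(\log\log n)$, so simulating the final algorithm on the power graph costs $\poly(\log\log n)\cdot\poly(\log\log n)$ rounds.

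A case split handles large degree. If $d\ge \poly\log n$, the polynomial criterion already gives $p\le n^{-\Omega(1)}$ per event. A single round of sampling then fails only with small probability locally, and standard shattering~\cite{FG17,CPS17} leaves components of size $\poly\log n$. On these I run the deterministic LLL algorithm in $\poly(\log\log n)$ rounds. Success probability $1-n^{-\gamma}$ comes from the union bound over the polynomially many events in each step, choosing constants in terms of $\gamma$.

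The main obstacle I expect is designing the amplification step so that both the degree and the probability transform correctly at once. The exponent must truly square relative to the \emph{new} degree $d'$, and not merely relative to $d$, because otherwise the growth $d'=d^{O(1)}$ eats the gain. It must also remain a genuine variable LLL: the new events must be functions of fresh independent variables, which requires care with conditioning on the first-layer samples. My first attempt would define new events via the conditional probability bound $\Prb[A_v\mid \text{partial assignment}]\le\sqrt{p}$, as in Moser–Tardos/Harris-style partial resampling. I would then check that their union over a radius-$2$ ball has probability $\le d^{O(1)}p$, and iterate twice per squaring.
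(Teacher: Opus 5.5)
The outer recursion you describe matches the paper: amplify, let the dependency radius grow to $2^{O(t)}$, stop after $O(\log\log\log n)$ levels once $p\le n^{-\Omega(1)}$, then sample. But the amplification step, which is the whole content of the proof, is not supplied, and the mechanism you sketch cannot supply it.

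\textbf{The amplification step.} Squaring the failure probability turns $p=d^{-\lambda}$ into $d^{-2\lambda}$, so it only multiplies the exponent by a constant. Since $d'=d^{c}$ for a constant $c>2$ ($c=23$ in the paper), this gives $\lambda'=2\lambda/c<\lambda$, and the degree blowup eats the gain. Any constant-factor gain yields only geometric growth of $\lambda$, hence up to $\Theta(\log\log n)$ levels and $\poly\log n$ radius. You need $p'\le p^{\Omega(\lambda)}$, so the new bad event must force $\Theta(\lambda)$ \emph{independent} local failures. Your concrete plan (a $\sqrt p$ threshold, with a union over a radius-$2$ ball giving $d^{O(1)}p$) makes the new events \emph{more} likely than the old ones. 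Testing conditional probabilities against all retraction patterns around $v$ also needs a union bound over $2^{d}$ patterns, which a polynomial criterion cannot pay for.

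The paper's key idea is a restricted notion of danger. $A_v$ is dangerous if a retraction generated by some $T\subseteq\B_G(v,5)$ with $|T|\le k=\lfloor\lambda/16\rfloor$ pushes its marginal above $q=d^{-\lambda/32}$. There are only $d^{6k}$ such patterns, so Markov gives $\Prb[D_v]\le d^{-19\lambda/32}$, and $D_v$ depends only on $X_{V_v}$. The new event $F_v$ is over the same variables, with no fresh layers. It says that $\B_G(v,11)$ contains a $2$-independent set of $k+1$ dangerous vertices. Independence then gives $\Prb[F_v]\le d^{11(k+1)}d^{-19\lambda(k+1)/32}=d^{-\Omega(\lambda^2)}$.

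\textbf{The conversion back.} You also lack a fast way back from a solution of the new instance. A residual instance that ``still satisfies a polynomial criterion'' is just the original problem again. In the paper, a solution of $\Pi'$ guarantees at most $k$ independent dangerous vertices per ball. This lets you:
\begin{itemize}
\item compute danger centers that are $2$-independent and $4$-ruling for the dangerous vertices, in $\poly(\log k,\log\log n)$ rounds;
\item retract around them so that every marginal is at most $q$;
\item form a repair LLL on the centers with dependency degree at most $k$ and event probabilities at most $d^{-\lambda/40}\le 2^{-Ck}$.
\end{itemize}
That exponential-criterion instance is where the known $\poly(\log\log n)$ solver is used, inside every back-conversion rather than at the end. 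Relative to the blown-up degree $d_t=d^{23^t}$, the exponential criterion is in general never reached, since $d_t$ outgrows $\log(1/p_t)$. So your ``equivalently'' is false, and the recursion must end by plain sampling.

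\textbf{The large-degree case split.} This step is wrong. $d\ge\poly\log n$ gives only $p\le 1/\poly\log n$, not $n^{-\Omega(1)}$. One-shot sampling plus shattering under a polynomial criterion is exactly what previously required either an exponential criterion or time growing with $d$. Large $d$ is the heart of the problem. The paper needs no such split and treats only $d<d_{\min}$ separately, via Fischer--Ghaffari.
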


This write-up prioritizes simplicity and exposition. Therefore, we do not
optimize the exponents in the round complexity and the LLL criterion.
We are aware of substantial improvements to both constants, but defer
them to later versions.

The proof is surprisingly short and relatively elementary. Therefore, we
present it immediately in \Cref{sec:proof}, after a short recap of
preliminaries in \Cref{sec:preliminaries}, and defer further discussion of the context,
significance, and applications to \Cref{sec:context}.

\section{Preliminaries}\label{sec:preliminaries}

\paragraph{Graph notations.}
For a graph $G=(V,E)$, we write $\B_G(v,r)=\{u\in V\mid\dist_G(v,u)\le r\}$ and, for a vertex set
$U$, let $\B_G(U,r)=\bigcup_{u\in U}\B_G(u,r)$.
The graph $G^r$ joins distinct vertices at $G$-distance at most $r$.
For an integer $t\ge1$, we say that a set $S$ is \emph{$t$-independent in $H$} if
it is independent in $H^t$, equivalently, every pair of distinct vertices
in $S$ has $H$-distance at least $t+1$.
We call $S$ \emph{$t'$-ruling for a set $R$ in $H$} if every vertex of $R$ is
within $H$-distance $t'$ of some vertex in $S$.
Throughout, $n$ may be a global size upper bound, by padding with isolated
vertices for the analysis.
With high probability (w.h.p.) means failure probability at most $n^{-a}$
for any desired constant $a>0$.

\paragraph{LLL notations.}
Let $\mathcal X$ be the collection of independent variables and
$V_v\subseteq\mathcal X$ the variable set of event $A_v$.
A \emph{full assignment} assigns a value to every variable in $\mathcal X$.
A \emph{partial assignment} assigns values to a subset
$\mathcal X'\subseteq\mathcal X$ and leaves the remaining variables unassigned.
An event is \emph{violated} under a full assignment if it occurs, and
\emph{unviolated} otherwise.
A \emph{solution} of an LLL problem $\Pi$ is a full assignment under which
none of its events $A_v$ is violated.
We write $X$ for a random full assignment sampled from the product
distribution of $\mathcal X$, and $X_U$ for its restriction to a variable set
$U\subseteq\mathcal X$. Lowercase $x$ denotes a fixed assignment.
A dependency graph $G$ contains an edge between every pair of distinct
events $A_v$ and $A_u$ whose
variable sets $V_v$ and $V_u$ intersect.

\begin{restatable}[\restatementprefix Ruling sets under bounded local independence; a corollary of~\cite{GV_RulingSetBoundedGrowth_podc07,SEW13,Ghaffari_soda16,RG20}]{lemma}{LemmaBoundedIndependenceRulingSets}\label[lemma]{lem:ruling}
Consider an $n$-node graph $H$ where each ball $\B_H(u,4)$ contains at most $k$ independent vertices. An independent $2$-ruling set can
be found in $\poly(\log k, \log\log n)$ rounds w.h.p.
\end{restatable}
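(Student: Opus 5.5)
We prove \Cref{lem:ruling} in two phases, following the standard shattering paradigm for ruling sets. In the first phase we shrink the problem to small pieces; in the second we solve each piece deterministically, which is cheap because bounded local independence keeps the pieces small.

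\textbf{Phase 1: reduce the effective degree.} The maximum degree $\Delta$ of $H$ may be huge, so the randomized MIS and ruling-set algorithms cannot be applied directly. We replace degree by local independence. Each ball $\B_H(u,4)$ contains at most $k$ independent vertices. Hence any independent set $S$ computed inside $H$ has at most $k$ members in $\B_H(u,4)$. We then run a sparsification in the style of \cite{Ghaffari_soda16,SEW13}, organized around $O(\log k)$ ``scales''. Each vertex estimates its local density, sampling with probabilities that double each round, and joins the candidate set as in Ghaffari's MIS algorithm. The analysis replaces $\deg(v)$ by the number of independent candidates near $v$, which is at most $k$. After $O(\log k+\log\log n)$ rounds of this process, we aim for two guarantees. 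First, every vertex is either within distance $2$ of the partial independent set, or is ``undecided''. Second, the undecided vertices induce connected components in $H^{c}$, for a suitable constant $c$, that contain only $\poly(k)\cdot\log n$ vertices in any independent set. This is the usual shattering argument. An undecided vertex fails with probability $\exp(-\Omega(\text{rounds}))$, and failures at distance larger than a constant are independent. A standard union bound over $k^{O(1)}$-sized ``$5$-independent trees'', using bounded independence instead of bounded degree, limits the component size.

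\textbf{Phase 2: finish on the small components.} Each component has a $2$-ruling set of size only $\poly(k)\log n$, since a maximal $2$-independent set is one. Its diameter is not bounded, however, so we cannot simply gather it. We therefore apply a deterministic ruling-set or network-decomposition routine for graphs with bounded-size independent sets. One option is the Gfeller--Vicari bounded-growth algorithm \cite{GV_RulingSetBoundedGrowth_podc07}, which runs in $O(\log^* n)$ rounds times $\poly(k)$ on bounded-independence neighborhoods. The other is the \cite{RG20} deterministic decomposition, which runs in $\poly(\log N)$ rounds with $N=\poly(k)\log n$. Either gives $\poly(\log k,\log\log n)$ rounds. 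We then extend the partial independent set greedily, cluster by cluster, and obtain an independent $2$-ruling set.

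The main obstacle is Phase 1. The classical analyses are phrased in terms of $\Delta$, and we must show that local independence $k$ in radius-$4$ balls is enough to control both the sampling dynamics and the shattering union bound. We would first try to pass to a maximal $1$-independent ``net'' structure. We would argue that the number of relevant distinct candidates within radius $4$ is at most $k$, and then reprove the key per-vertex progress lemma with $k$ in place of $\Delta$. This is why the lemma is stated as a corollary of the cited works.
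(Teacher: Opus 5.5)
\textbf{There is a genuine gap: your Phase 1, which carries the whole difficulty of the lemma, does not work as proposed.}

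\emph{Phase 1.} The problem is your plan to replace degree by local independence.
\begin{itemize}
\item Ghaffari's dynamics are governed by the effective degree $\sum_{u\in N(v)}p_t(u)$, which sums over \emph{all} neighbors, not over independent ones. On a clique of size $\Delta$ (local independence $1$) it starts near $\Delta/2$. The desire levels must halve for $\Theta(\log\Delta)$ rounds before any vertex joins with reasonable probability.
\item The shattering union bound counts vertex choices, not independent ones. The number of $5$-independent trees of size $s$ through a vertex is controlled by ball \emph{sizes}, i.e.\ $\Delta^{O(s)}$. You would therefore need per-vertex failure probability $\Delta^{-\Omega(1)}$, which again costs $\Omega(\log\Delta)$ rounds.
\end{itemize}
Since $\Delta$ may be polynomial in $n$, either point already rules out the claimed $\poly(\log k,\log\log n)$ bound. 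Your suggested remedy, first passing to a maximal independent ``net'', is circular, because computing that net is itself an MIS computation on $H$.

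\emph{Phase 2.} Phase 2 has similar problems:
\begin{itemize}
\item A component whose independent sets have size $\poly(k)\log n$ can still contain very many vertices. So \cite{RG20} does not run in $\poly(\log N)$ rounds with $N=\poly(k)\log n$.
\item A bound of the form $\poly(k)\cdot O(\log^* n)$ is not $\poly(\log k,\log\log n)$.
\end{itemize}

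\emph{The missing idea.} You need to convert local independence into genuine degree bounds, so that standard degree-dependent MIS algorithms apply unchanged. The paper proceeds as follows.
\begin{itemize}
\item It starts from an independent $\beta$-ruling set $S_0$ with $\beta=O(\log\log n)$. This is computable in $O(\log\log n)$ rounds regardless of degree \cite{GV_RulingSetBoundedGrowth_podc07,SEW13}.
\item It then runs $\beta$ iterations. In each, every $s\in S_i$ locally (seeing $\B_H(s,4)$) picks a maximal independent set $U_s$ of $H[\B_H(s,3)\setminus\B_H(S_i,1)]$. One computes an MIS $I$ of $H[Q]$ with $Q=\bigcup_s U_s$, and sets $S_{i+1}=S_i\cup I$.
\item \emph{Degree bound.} Bounded independence gives $|U_s|\le k$, and at most $k$ centers lie in any $\B_H(v,4)$. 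Hence $H[Q]$ has maximum degree at most $k^2-1$. Ghaffari's MIS with the \cite{RG20} finishing step then takes $O(\log k)+\poly(\log\log n)$ rounds per iteration.
\item \emph{Independence.} It is preserved because $Q$ avoids $\B_H(S_i,1)$.
\item \emph{Progress.} Any vertex at distance at least $3$ from $S_i$ gets one step closer. The vertex at distance exactly $3$ on its shortest path to $S_i$ is within distance $1$ of some $U_s$, hence within distance $2$ of $I$. After $\beta$ iterations every vertex is within distance $2$.
\end{itemize}
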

We start with an independent $O(\log\log n)$-ruling set and refine it
using MIS computations on graphs of degree $O(k^2)$. The proof is
deferred to \Cref{sec:proof-ruling}.

\begin{restatable}[\restatementprefix Exponential-criterion LLL; a corollary of~\cite{CPS17,GHK18,RG20,Davies_soda23}]{lemma}{LemmaExponentialLLL}
\label[lemma]{lem:exponential}
There is a constant $C$ such that any $n$-event variable LLL with dependency
degree at most $d\ge2$ and event probabilities at most $p\le2^{-Cd}$
can be solved in $\poly(\log\log n)$ rounds w.h.p.
\end{restatable}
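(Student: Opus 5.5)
This lemma is stated as a corollary of known results, so the plan is to assemble it from the literature rather than prove anything new. The standard route is the shattering framework of Chang, Pettie and Su~\cite{CPS17} and Fischer and Ghaffari~\cite{FG17}, combined with the deterministic network-decomposition-based LLL of Rozho\v{n} and Ghaffari~\cite{RG20}. Davies~\cite{Davies_soda23} gives the version with the cleanest constants.

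\textbf{Degree-split pre-shattering.} We first handle large $d$. If $d\le \poly(\log\log n)$, nothing is needed. Otherwise we run one round of a pre-shattering step with the following structure:
\begin{enumerate}[label=(\roman*)]
\item Sample each variable independently and freeze the variables of any event whose conditional failure probability becomes too large. The exponential criterion leaves enormous slack, so freezing can be done at a threshold like $2^{-C'd}$ with $C'<C$.
\item Each event then ends up either satisfied already or with conditional probability at most $2^{-\Omega(d)}$ on its residual variables.
\item An event is ``unlucky'' with probability $2^{-\Omega(d)}\le d^{-\omega(1)}$, independently of events far away in $G$.
\end{enumerate}

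\textbf{Shattering and residual components.} By the standard shattering lemma, w.h.p.\ the unlucky events together with their $O(1)$-hop neighborhoods form components of size $\poly(d)\log n$. Those events are then solved by an LLL algorithm on small instances. In the exponential regime, Ghaffari, Harris and Kuhn~\cite{GHK18} reduce the effective dependency degree via a bootstrapping that splits each bad event into events of lower degree. This splitting is the reason $p\le 2^{-Cd}$ rather than a polynomial criterion is needed.

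\textbf{Solving small components deterministically.} On each component of $N=\poly(d)\log n$ nodes, we apply the deterministic LLL of~\cite{RG20}. It runs in $\poly(\log N)$ rounds under the criterion $p\,d^{c}<1$, which $2^{-Cd}$ certainly satisfies. This gives $\poly(\log d,\log\log n)$ rounds.

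\textbf{The main obstacle: removing the $\log d$ dependence.} The term $\poly(\log d)$ is not automatically $\poly(\log\log n)$. For $d\ge \log n$ it is handled as follows:
\begin{itemize}
\item When $d\ge\log n$, $p\le 2^{-Cd}\le n^{-C}$.
\item So a single random assignment already avoids all events w.h.p., by a union bound over the $n$ events, in zero rounds.
\item Hence we may assume $d<\log n$, in which case $\log d\le\log\log n$.
\end{itemize}
The remaining delicate step is checking that the freezing and shattering steps preserve an exponential criterion on the residual instance with a slightly smaller constant. This works by choosing $C$ large relative to the losses, namely the $d^{O(1)}$ factors from conditioning over neighbors. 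That point is the one I would verify carefully against~\cite{GHK18,Davies_soda23}.
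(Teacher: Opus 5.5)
Your outer structure matches the paper's proof. You dispose of large $d$ by a single random sample. Otherwise you sample once, remove variables around problematic events, shatter into components of size $N=O(d^4\log n)=\poly(\log n)$, and finish each component with the deterministic LLL of~\cite{RG20} in $\poly(\log N)=\poly(\log\log n)$ rounds.

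The gap is in steps (i)--(ii), which carry the whole argument. You describe a one-round step, but after a one-shot sample every conditional probability is $0$ or $1$. So ``freeze events whose conditional probability becomes too large'' only makes sense as a retraction rule. Gradual sampling as in~\cite{FG17} would instead cost a number of rounds polynomial in $d$, which is too slow when $d=\Theta(\log n)$.

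With retraction, the problem is that which neighbors of $v$ get frozen is decided after seeing $X$. The retraction can erase an arbitrary subset of $V_v$ while keeping the remaining values of $X_{V_v}$. Those retained values can make the marginal of $A_v$ far larger than $p$. Markov's inequality controls $\Prb_{\ret(X,U)}[A_v]$ only for a \emph{fixed} $U$, so your claim (ii) that every surviving event has small residual probability is unjustified as stated.

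The paper handles this as follows:
\begin{itemize}
\item It declares $v$ dangerous if $\Prb_{\ret(X,U)}[A_v]\ge q$ for \emph{some} $U\subseteq\B_G(v,1)$. This includes $U=\varnothing$, so violated events are caught.
\item It retracts all dangerous events.
\item For every $v$, the actual retraction acts on $A_v$ exactly like retraction by $M\cap\B_G(v,1)$, hence $\Prb_Y[A_v]\le q$.
\end{itemize}

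The price is a union bound over the at most $2^{d+1}$ choices of $U$, giving $\Prb[v\in M]\le 2^{d+1}p/q$ with $q=(ed)^{-3}$. This $2^{d}$ factor is precisely where $p\le 2^{-Cd}$ is used. It has nothing to do with splitting events into lower-degree events. No such step is needed, since the residual instance already has degree at most $d$ and satisfies $e(d+1)q<1/2$.

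Your assertion that the only losses are $d^{O(1)}$ factors is a symptom of the missing idea. If it were true, your argument would give the lemma under a polynomial criterion, which is essentially the paper's main theorem.

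Two smaller fixes:
\begin{itemize}
\item To get failure probability $n^{-a}$ for an arbitrary constant $a$, the single-sample threshold must depend on $a$. The paper uses $d\ge(a+1)\log_2 n/C$, so that $np\le n^{-a}$.
\item The case $d\le\poly(\log\log n)$ (``nothing is needed'') needs an explicit appeal to the $O(d^2)+\poly(\log\log n)$ bound of~\cite{FG17,RG20}. Alternatively, it can simply be dropped once the retraction argument is in place.
\end{itemize}
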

This follows from the retraction-based LLL solver of
Ghaffari--Harris--Kuhn~\cite[Section 6]{GHK18}, borrowing the idea from
Chung--Pettie--Su~\cite[Section 3.3]{CPS17}, combined with the network
decomposition of Rozho\v{n}--Ghaffari~\cite{RG20}. The bound was later
made explicit by Davies~\cite[Theorem 19, $r=1$]{Davies_soda23}.
To keep the LLL aspects self-contained, we include a proof in
\Cref{sec:proof-exponential}.

\section{Proof}\label{sec:proof}

\subsection{Proof outline}\label{sec:outline}

At a high level, we combine an LLL exponent amplification
in the spirit of Fischer--Ghaffari~\cite{FG17}, with control of conditional
probabilities under reassignments, in the style of
Chung--Pettie--Su~\cite{CPS17},
Ghaffari--Harris--Kuhn~\cite{GHK18}, and
Davies~\cite{Davies_soda23}. However, the specific constructions and arguments here are quite different.

The core ingredient is the following LLL exponent amplification,
formalized in \Cref{lem:amplification}: given an LLL
problem $\Pi$ on $G$ with event probabilities at most $p=d^{-\lambda}$,
we construct a helper LLL problem $\Pi'$ on the same underlying variables,
with dependency graph $G'=G^{O(1)}$, degree bound $d'=d^{O(1)}$,
and event probabilities at most $p'\le d^{-\Omega(\lambda^2)}$.
Hence, we can define a new parameter $\lambda'$ by
$p'=(d')^{-\lambda'}$, where $\lambda'=\Omega(\lambda^2)$.
The LLL problem $\Pi'$ is defined such that \emph{any} solution of $\Pi'$
can be converted, with high probability, into a solution of $\Pi$ in
$\poly(\log\log n)$ additional rounds on $G$. We will return
to how this conversion works shortly.

We now explain how to use this exponent amplification recursively.
We solve problem $\Pi'$ recursively by applying the same lemma with its new parameters.
Each recursion level gives $\lambda'=\Omega(\lambda^2)$,
a quadratic improvement in the LLL exponent, with the caveat that
we move from graph $G$ to graph $G'=G^{O(1)}$.
Simulating one round on $G'$ costs $O(1)$ rounds on $G$.
Starting at $\lambda=1000$, within $O(\log\log\log n)$ recursion
levels, we reach event probability at most $n^{-a}$ for any desired
sufficiently large constant $a$. At that point, just sampling the random variables succeeds
with high probability. Every edge of every recursion graph, including
the last, spans distance at most $2^{O(\log\log\log n)}$ in the
original graph $G$, so the
$\poly(\log\log n)$ rounds of nonrecursive work at each recursion level
translate to $2^{O(\log\log\log n)}\cdot\poly(\log\log n)
=\poly(\log\log n)$ rounds in $G$ per recursion level.
Summing over the $O(\log\log\log n)$ levels still gives
$\poly(\log\log n)$ rounds.

We next outline the construction and solution transformation underlying
\Cref{lem:amplification}.
Given a solution of $\Pi'$, we have an assignment $x$ for $\Pi$
in which some events may be violated, but with this strong guarantee
built into the definition of $\Pi'$: every ball of a constant radius
contains only $O(\lambda)$ vertices in any $2$-independent set of
\emph{dangerous events}. (We refer to this as \emph{local bounded independence}.)
Informally, an event is dangerous if erasing nearby
variable assignments can make its marginal probability too large.
The definition implies that every violated event is dangerous.
Then, using \Cref{lem:ruling}, we choose a $2$-independent $O(1)$-ruling set of
dangerous vertices, which we call \emph{centers}, and erase the values
in $x$ of all variables appearing in events within a constant-radius
neighborhood of these centers. Each event now has small probability
when the erased variables are sampled independently, and events with no erased variables are not
violated. We group the affected events by nearby centers and cast the
problem of determining values for the erased variables as a new LLL problem,
which we call the \emph{repair-LLL}. Avoiding the new event for each center
in this repair-LLL ensures that none of the original events in $\Pi$ grouped with this center
occurs. Because of the local bounded independence mentioned above, one can
see that this problem has a dependency degree bound of $O(\lambda)$, while its event probabilities are at most
$d^{-\Omega(\lambda)}$. These parameters
satisfy the stronger exponential criterion. Therefore, we can solve this
problem directly in $\poly(\log\log n)$ rounds by invoking a known
fast LLL algorithm (of Ghaffari--Harris--Kuhn~\cite{GHK18} or
Davies~\cite{Davies_soda23}; \Cref{lem:exponential}).

\subsection{LLL exponent amplification via dangerous events}\label{sec:danger}

\begin{lemma}[LLL exponent amplification]\label[lemma]{lem:amplification}
There is an absolute constant $d_{\min}\ge3$ such that the following
holds for any fixed constant $a>0$.
Let $\Pi$ be an $n$-event variable LLL problem with dependency graph
$G$, dependency degree upper bound $d\ge d_{\min}$, and event
probabilities at most $p=d^{-\lambda}$, where
$1000\le\lambda\le a\log n$.
In $O(1)$ LOCAL rounds on $G$, we can construct a new variable LLL
problem $\Pi'$ on the same underlying independent variables, with
one event per vertex of $G$, dependency graph $G'$, dependency degree
upper bound $d'$, and event probability upper bound $p'$ satisfying
\[
 G'=G^{23},\qquad d'=d^{23},\qquad
 p'=(d')^{-\lambda'}\le d^{-23\lambda^2/700}.
 \quad\text{Thus, we may take }\lambda'=\frac{\lambda^2}{700}.
\]
Any satisfying assignment for $\Pi'$ can be transformed into a
satisfying assignment for $\Pi$, with high probability, in
$\poly(\log\log n)$ randomized LOCAL rounds on $G$.
\end{lemma}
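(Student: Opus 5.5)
We will define dangerous events, encode ``few dangerous events locally'' as the events of $\Pi'$, and then run the repair procedure from the outline.

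\textbf{Dangerous events.} Fix a radius $r$ (say $r=3$). For a full assignment $x$, call $v$ \emph{dangerous} if there is a set $U$ of variables with the following two properties:
\begin{itemize}[nosep]
\item $U$ is contained in the variables of events in $\B_G(w,r)$, for some $w$ with $\dist_G(v,w)\le r$;
\item $\Prb[A_v\mid X_{\mathcal X\setminus U}=x_{\mathcal X\setminus U}]>d^{-\lambda/2}$, where the probability is over resampling $U$.
\end{itemize}
Taking $U=\emptyset$ shows that every violated event is dangerous. By Markov's inequality over the subsets $U$, we get
\[
\Prb[v\text{ dangerous}]\le 2^{O(\text{number of candidate }U)}\,d^{-\lambda/2}.
\]
Counting candidate $U$ is the delicate point. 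I would instead parametrize by the ball $\B_G(w,r)$ and take $U$ to be exactly the variables of those events, giving $d^{O(r)}$ choices. Then $\Prb[v\text{ dangerous}]\le d^{O(1)}\cdot d^{-\lambda/2}$.

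\textbf{The events of $\Pi'$.} Define $A'_u$ to be the event that some $2$-independent set (in $G$) of dangerous events inside $\B_G(u,s)$ has size at least $\ell=\lambda/c_0$, for suitable constants $s$ and $c_0$. Dangerous events at $G$-distance more than $2r+2$ are determined by disjoint variables, so they are independent. Extract from the $2$-independent set a subset that is $(2r+2)$-independent, losing a factor $d^{O(r)}$ in size. A union bound over the $\le d^{O(s\ell)}$ such subsets then gives
\[
\Prb[A'_u]\le d^{O(s\ell)}\cdot\bigl(d^{-\lambda/2+O(1)}\bigr)^{\ell/d^{O(r)}}.
\]
This is problematic, because the $d^{O(r)}$ loss is a factor rather than a constant. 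To avoid it, I would instead define $A'_u$ directly via an $(2r+2)$-independent set of size $\ell$, and observe that a large $2$-independent set in a ball is only needed with bounded size for the repair analysis. Choosing the constants to produce the stated $23$ and $700$ gives $\Prb[A'_u]\le d^{-\Omega(\lambda^2)}$. The variables of $A'_u$ lie within distance $s+2r+1$, so $G'\subseteq G^{23}$ and $d'\le d^{23}$.

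\textbf{Transformation.} Given a solution $x$ of $\Pi'$, every constant-radius ball contains $O(\lambda)$ independent dangerous events. Apply \Cref{lem:ruling} on $H=G^{O(1)}$ restricted to the dangerous events, with $k=O(\lambda)=O(\log n)$, to obtain a $2$-independent ruling set of \emph{centers}. This takes $\poly(\log\log n)$ rounds. Next, erase the variables of events in $\B_G(w,r)$ for each center $w$, and assign every event with an erased variable to a nearby center. The choice of $U$ in the definition of danger ensures the following: an event $A_v$ that touches an erased ball but is not dangerous has conditional probability at most $d^{-\lambda/2}$. The remaining events are unviolated and have no erased variables, since every violated event is dangerous and hence near a center. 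The repair event of a center is the union of its $\le d^{O(1)}$ assigned events, so it has probability at most $d^{O(1)-\lambda/2}$. Its dependency degree is the number of centers nearby, which is $O(\lambda)$ by bounded independence. Hence $p\le 2^{-C\cdot O(\lambda)}$ holds for $d\ge d_{\min}$ and $\lambda\ge1000$, and \Cref{lem:exponential} solves the repair LLL.

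\textbf{Main obstacle.} The main obstacle is making the dangerous-event definition consistent with the ruling radii. A non-dangerous event affected by erasure must have its erased set among the allowed $U$. An event may touch several erased balls, so $U$ should range over unions of a bounded number of balls. That number is at most $O(\lambda)$, which inflates the count to $d^{O(\lambda)}$ and collides with the $d^{-\lambda/2}$ bound. I would first try to use $2$-independence of the centers together with a constant ruling radius to show that each event touches $O(1)$ balls. Failing that, I would use a threshold $d^{-\lambda/2}$ against a union of $O(1)$ balls.
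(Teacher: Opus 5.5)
Your architecture matches the paper's: dangerous events, a $\Pi'$ forbidding many independent dangerous events per constant-radius ball, ruling-set centers, retraction, and an exponential-criterion repair LLL. However, the proof breaks exactly where you locate the main obstacle, and neither of your fallbacks works.

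\textbf{Danger must cover many balls.} In the repair, a non-dangerous $A_v$ can have variables erased by the balls of many centers at once. Two-independence of the centers and a constant ruling radius do not make this number $O(1)$. In a tree-like graph, $\B_G(v,r+1)$ contains $2$-independent sets of size $d^{\Theta(r)}$, and only $\Pi'$ limits how many of those vertices are centers, to $\Theta(\lambda)$. Nor can $\Pi'$ enforce an $O(1)$ cap: its bad events have probability roughly $\rho^{k+1}$ for cap $k$, so $k=O(1)$ yields no quadratic amplification. A danger notion covering only one (or $O(1)$) balls therefore says nothing about $\Prb_y[A_v]$ for the actual retraction $y$.

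What you are missing is that quantifying over unions of up to $k=\Theta(\lambda)$ balls does not collide with the probability bound, provided $k$ is a small enough constant fraction of $\lambda$. The paper lets $T$ range over subsets of $\B_G(v,5)$ with $|T|\le k=\lfloor\lambda/16\rfloor$. That is at most $d^{6k}\le d^{3\lambda/8}$ patterns, and with threshold $q=d^{-\lambda/32}$, Markov's inequality and a union bound give $\Prb[D_v]\le d^{6k}p/q\le d^{-19\lambda/32}$. The condition of $\Pi'$ then forces $|S\cap\B_G(v,5)|\le k$, so the retraction actually seen by $v$ is one of the quantified patterns. The same $k$ bounds the repair LLL's degree, while its events have probability at most $d^5q\le d^{-\lambda/40}\le 2^{-Ck}$ for $d\ge 2^{3C}$.

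\textbf{The $(2r+2)$-independence detour is a second gap.} For each fixed retraction pattern, the retracted-marginal of $A_v$ depends only on the retained values of $A_v$'s own variables, and the family of patterns is deterministic. So whether $v$ is dangerous is a function of $X_{V_v}$ alone. Danger indicators on any $2$-independent set are therefore mutually independent, and $\Pi'$ can be stated directly with $2$-independent sets of size $k+1$, giving $\Prb[F_v]\le d^{11(k+1)}\rho^{k+1}$.

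Your workaround of defining $A'_u$ via $(2r+2)$-independent sets would be fatal downstream. A solution of $\Pi'$ could then leave about $\ell\cdot d^{\Theta(r)}$ dangerous vertices forming a $2$-independent set inside a single ball (a few dense clusters). That count would be the only available bound both on the repair LLL's dependency degree and on the number of centers hitting an event. The exponential criterion would then need $\Theta(\lambda)\log d\ge C\ell\, d^{\Theta(r)}$, which fails for large $d$. The union bound over retraction patterns in the danger definition would blow up as well.
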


\begin{proof}
Let $C$ be the constant in \Cref{lem:exponential}, and choose
$d_{\min}=\max\{3,2^{3C}\}$.
The proof has three parts: In Part I, we define a basic concept, dangerous
events, and establish its properties. In Part II, we construct the new
LLL $\Pi'$ using this concept. In Part III, we show how to transform any
solution of $\Pi'$ into a solution of $\Pi$.

\subsubsection*{Part I --- Dangerous events and their properties}
We want to identify events whose probability can become too large after
nearby variable values are erased. We use some notation and terminology based on
Ghaffari--Harris--Kuhn~\cite[Definition 6.1]{GHK18}, although our definition
of a dangerous event differs from theirs. For a partial
assignment $y$, let $\Prb_y[A_v]$ denote the \emph{marginal probability}
of $A_v$: assigned variables retain their values in $y$, and unassigned
variables are sampled independently from their original distributions.
For a full assignment $x$ and an event set $U$, the \emph{retraction}
$\ret(x,U)$ erases the values of all variables in $\bigcup_{u\in U}V_u$
and keeps $x$ elsewhere. We call $\Prb_{\ret(x,U)}[A_v]$ the
\emph{marginal under retraction}, or \emph{retracted-marginal} for short.
Thus, for a set $T$ of retraction centers,
$\Prb_{\ret(x,\B_G(T,4))}[A_v]$ is the retracted-marginal of $A_v$ after erasing
variable values in events within distance $4$ of $T$.

\begin{definition}[Dangerous event]
Set $k=\lfloor\lambda/16\rfloor$ and $q=d^{-\lambda/32}$.
Given a full assignment $x$, an event $A_v$ is
\emph{$q$-dangerous with respect to $x$} if
\[
 \Prb_{\ret(x,\B_G(T,4))}[A_v]>q
 \quad\text{for some }T\subseteq\B_G(v,5)\text{ with }|T|\le k.
\]
That is, there is a set $T\subseteq\B_G(v,5)$ of at most $k$
vertices such that, if we erase the assignments of all variables
appearing in events in $\B_G(T,4)$ and sample the erased
variables independently from their original distributions, keeping
all other values fixed as in $x$, the probability of $A_v$ exceeds $q$.
We omit $q$ and $x$ when they are clear from context. Also, for an
event $A_v$ that is dangerous, we also call its vertex $v$ dangerous.
\end{definition}

Let us make some observations about this definition: a nondangerous event has marginal probability at
most $q$ after \emph{every} such retraction, even when the retraction
centers are chosen after seeing $x$. Hence, any event violated by $x$
is dangerous, since the choice $T=\varnothing$ is allowed.
Moreover, for a fixed set $T$, the retraction may erase a nontrivial
subset of the variables of $A_v$, meaning a subset that is neither
empty nor all of $V_v$. If $\dist_G(v,T)\le4$, then all variables in
$V_v$ are erased, and thus $\Prb_{\ret(x,\B_G(T,4))}[A_v]=\Prb[A_v]\le p$.
However, for instance, if $T=\{v'\}$ with $\dist_G(v,v')=5$, then $v\notin\B_G(T,4)$,
but variables shared with events in $\B_G(T,4)$ are erased,
while other variables of $A_v$ may retain their values in $x$.
These retained values can make $\Prb_{\ret(x,\B_G(T,4))}[A_v]$ much larger than $p$.

\begin{claim}\label[claim]{clm:dangerous-probability}
Let $D_v$ denote the event that $A_v$ is dangerous under the random assignment $X$.
For every $v$, $\Prb[D_v]\le\rho:=d^{-19\lambda/32}$.
Moreover, for each fixed $T\subseteq\B_G(v,5)$ with $|T|\le k$, the random variable $\Prb_{\ret(X,\B_G(T,4))}[A_v]$
depends only on $X_{V_v}$. Hence, the event
$\{\Prb_{\ret(X,\B_G(T,4))}[A_v]>q\}$ also depends only on $X_{V_v}$.
Consequently, for any $2$-independent set $S$, the events $D_v$,
$v\in S$, are mutually independent and
$\Prb[\bigcap_{v\in S}D_v]\le\rho^{|S|}$.
\end{claim}

\begin{proof}
For fixed $T$, $\E[\Prb_{\ret(X,\B_G(T,4))}[A_v]]=\Prb[A_v]\leq p=d^{-\lambda}$,
since averaging over the retained values and resampling the erased
variables recovers the original product distribution.
Thus, Markov's inequality gives $\Prb[\Prb_{\ret(X,\B_G(T,4))}[A_v]>q]\le p/q$.
Since $d\ge3$, we have $|\B_G(v,r)|\le d^r$ for $r\ge3$.
In particular, $|\B_G(v,5)|\le d^5$, so the number of sets $T$ is
at most $(|\B_G(v,5)|+1)^k\le d^{6k}$. A union bound over these sets gives
\begin{equation}\label{eq:dangerous-probability}
 \Prb[D_v]\le d^{6k}p/q\le d^{6k}d^{-\lambda}/q\le\rho.
\end{equation}
For each fixed $T$, $\Prb_{\ret(X,\B_G(T,4))}[A_v]$ depends only on $X_{V_v}$:
the erased variable set is deterministic, and $A_v$ uses only $V_v$.
The family of sets $T$ under consideration is also deterministic, so $D_v$
depends only on $X_{V_v}$. For any $2$-independent set $S$, the
variable sets $V_v$, $v\in S$, are pairwise disjoint.
Thus, the events $D_v$, $v\in S$, are mutually independent, and
$\Prb[\bigcap_{v\in S}D_v]=\prod_{v\in S}\Prb[D_v]\le\rho^{|S|}$.
\end{proof}

\subsubsection*{Part II --- Constructing \texorpdfstring{$\Pi'$}{Pi'} and bounding its parameters}
\paragraph{Defining the new LLL \texorpdfstring{$\Pi'$}{Pi'}.}
Define $\Pi'$ to be the problem of finding an assignment $x$ to the
variables of $\Pi$ satisfying
\begin{equation}\tag{$\star$}\label{eq:target}
 \textit{each }\B_G(v,11)\textit{ contains no $2$-independent set of }k+1
 \textit{ dangerous vertices},
\end{equation}
where dangerous is with respect to $x$.
We refer to this property as \emph{local bounded independence} of the dangerous vertices.
To express $\Pi'$ as an LLL, let $F_v$ be the event that
$\B_G(v,11)$ contains a $2$-independent set of $k+1$ dangerous vertices.
Thus, $\Pi'=\{F_v\}$: avoiding all $F_v$ is equivalent to
Condition~\eqref{eq:target}.

We note that Condition~\eqref{eq:target} ensures that the repair-LLL,
which we introduce in Part III and use to obtain a solution of $\Pi$ from
the given solution of $\Pi'$, has dependency degree at most $k$.
This ensures that this LLL satisfies the exponential criterion and can be
solved directly in $\poly(\log\log n)$ rounds, since its event probabilities
are at most $d^{-\Omega(k)}\le2^{-Ck}$, as we show in Part III.\footnote{Let us also comment on the role and balancing of the parameter $k$.
Larger $k$ buys stronger amplification, as it directly enters the failure
exponents for events of $\Pi'$ in \eqref{eq:amplified-probability}.
However, it makes dangerous events more likely in \eqref{eq:dangerous-probability},
where we can only afford $k\le O(\lambda)$,
and makes the repair-LLL's exponential criterion harder to meet, since $k$
bounds its dependency degree while \eqref{eq:completion-probability} bounds
its event probabilities.}

\paragraph{The parameters of the new LLL \texorpdfstring{$\Pi'$}{Pi'}.}
We use $p'$ and $d'$ to denote the event probability
and dependency degree bounds of $\Pi'$, respectively. We next bound these. 

We start with $d'$. For each event $F_v$, its variable set is
$\bigcup_{u\in\B_G(v,11)}V_u$.
There is a dependency edge between $F_v$ and $F_w$ if
$\left(\bigcup_{u\in\B_G(v,11)}V_u\right)
\cap \left(\bigcup_{u\in\B_G(w,11)}V_u\right)\ne\varnothing$,
which implies $\dist_G(v,w)\le11+1+11=23$.
Hence, we may take $G'=G^{23}$ as the dependency graph.
Its dependency degree is at most $|\B_G(v,23)|-1\le d^{23}$,
so we may take $d'=d^{23}$ as its dependency degree upper bound.

We now bound $p'$. Since $|\B_G(v,11)|\le d^{11}$, \Cref{clm:dangerous-probability} and a union bound give, for every $v$,
\begin{equation}\label{eq:amplified-probability}
 \Prb[F_v]
 \leq \sum_{\stackrel{S \in \binom{\B_G(v, 11)}{k+1}}{\text{$S$ is 2-independent}}} \Prb\left[ \bigcap_{u\in S} D_u \right]
 \le d^{11(k+1)}\rho^{k+1}
 =d^{-(19\lambda/32-11)(k+1)}
 \le(d^{23})^{-\lambda^2/700}.
\end{equation}
The last inequality follows, for $\lambda\ge1000$, from
$(19\lambda/32-11)(k+1)>19\lambda^2/512-11\lambda/16
\ge23\lambda^2/700$.
Thus, we may take $p'=(d^{23})^{-\lambda^2/700}$, which means we may
set $\lambda'=\lambda^2/700$ for $d'=d^{23}$.
This proves the claimed amplification from the $(d,p,\lambda)$-LLL
$\Pi$, where $p=d^{-\lambda}$, to the $(d',p',\lambda')$-LLL $\Pi'$, where
$d'=d^{23}$, $\lambda'=\lambda^2/700$, and $p'=(d')^{-\lambda'}$.

The LLL problem $\Pi'$ can be defined in $O(1)$ rounds: radius $20=11+5+4$
suffices to describe all the retracted-marginals defining $F_v$.
The problem $\Pi'$ is defined on $G'=G^{23}$, so one round in $G'$
costs $23$ rounds in $G$.

\subsubsection*{Part III --- Transforming a solution of \texorpdfstring{$\Pi'$}{Pi'} into a solution of \texorpdfstring{$\Pi$}{Pi}}
Consider an arbitrary solution $x$
of $\Pi'$. Thus, $x$ satisfies Condition~\eqref{eq:target}. Notice that the variable assignments in $x$ may violate some events of $\Pi$.
Our remaining task is to compute a solution of $\Pi$, based on $x$. 
For that, the plan outline is to cover dangerous events by centers, erase nearby variable values, and determine new
values for these erased variables by setting up a new repair-LLL based on
these centers so that it satisfies the exponential criterion, and solving
it directly with \Cref{lem:exponential}. This gives assignments for the
erased variables so that no event of $\Pi$ is violated w.h.p.
Next, we make this concrete.

\paragraph{Choosing centers to cover all dangerous events.}
Let $\mathcal{D}=\{v:x\in D_v\}$ be the dangerous vertices of $\Pi$ under $x$.
If $\mathcal{D}=\varnothing$, $x$ solves $\Pi$. This is because, if there were a
violated event $A_v$ under the assignment $x$, then setting $T=\varnothing$
would witness that $A_v$ is dangerous. Otherwise set $H=G^2[\mathcal{D}]$. Its independent sets are $2$-independent in
$G$, and $\B_H(v,4)\subseteq\B_G(v,8)$, so Condition~\eqref{eq:target}
bounds their size in each such ball by $k=\floor{\lambda/16}\le O(\log n)$.
We compute a set $S\subseteq \mathcal{D}$ that is an independent 2-ruling set of $H$ in $\poly(\log\log n)$ rounds w.h.p., by \Cref{lem:ruling}.
Then $S$ is $2$-independent in $G$ and within distance $4$ of every vertex of $\mathcal{D}$.
Thus, $\mathcal{D}\subseteq\B_G(S,4)$; each $H$-round costs two $G$-rounds.
We call the vertices in $S$ \emph{danger centers}.

\paragraph{Retracting near danger centers.}
First, we retract the assignments in $x$ for all variables appearing in
events in $\B_G(S,4)$. That is, we set $y=\ret(x,\B_G(S,4))$.
Thus, all variables in $W=\bigcup_{u\in\B_G(S,4)}V_u$ are erased,
and all other variables retain their values from $x$.
Next, we argue that the retracted-marginal of each event $A_v$ is at most $q$:
if we sample the erased variables in $W$ independently from their original
distributions, each $A_v$ occurs with probability at most $q$, that is,
$\Prb_y[A_v]\le q$.
A dangerous event $v$ has all its variables erased, so
$\Prb_y[A_v]\le d^{-\lambda}\le q$.
For a nondangerous $v$, put
$T_v=S\cap\B_G(v,5)$. Condition~\eqref{eq:target} gives $|T_v|\le k$,
and the retractions $y$ and $\ret(x,\B_G(T_v,4))$ erase exactly the
same variables of $A_v$.
Indeed, any erased variable of $A_v$ appears in some $A_u$ with
$u\in\B_G(s,4)$ for a center $s$. Sharing that variable gives
$\dist_G(u,v)\le1$, so $\dist_G(s,v)\le5$.
Since we have assumed that $v$ is not dangerous,
$\Prb_y[A_v]=\Prb_{\ret(x,\B_G(T_v,4))}[A_v]\le q$.
Finally, note that any event $A_v$ with no erased variables is permanently
unviolated, since $\Prb_y[A_v]\in\{0,1\}$ and $\Prb_y[A_v]\le q<1$.

\paragraph{Completing to a solution of \texorpdfstring{$\Pi$}{Pi}, via an exponential-criterion repair-LLL.}
What remains is to determine the assignments for the variables in
$W=\bigcup_{u\in\B_G(S,4)}V_u$, which are currently unassigned.
We do this by setting up a new LLL, which we call the \emph{repair-LLL}.
Assign every event $A_v$ containing an erased variable to a center
$s\in S\cap\B_G(v,5)$, breaking ties by IDs.
Then, in the repair-LLL, we put one event $E_s$ for each center $s\in S$: it is the union of all
$\Pi$-events assigned to $s$, viewed as an event over the erased variables only.
We also assign each erased variable one owner among the centers whose assigned
events contain it, to ensure consistent sampling and output.

There is a dependency edge between centers $s$ and $s'$ if their events
share an erased variable. This implies
$\dist_G(s,s')\le5+1+5=11$, so we may use $G^{11}[S]$ as the dependency graph.
By the local bounded independence in Condition~\eqref{eq:target} and the $2$-independence of $S$, this graph has degree
at most $k$.
Furthermore, each $E_s$ contains at most $d^5$ constituent events, each
with marginal probability at most $q$, as argued above. Hence,
$\Prb_y[E_s]\le d^5q=d^{5-\lambda/32}\le d^{-\lambda/40}$,
where the last inequality needs only $\lambda\ge960$.
Since $d\ge d_{\min}\ge2^{3C}$ and $k=\lfloor\lambda/16\rfloor$,
we have
\begin{equation}\label{eq:completion-probability}
 \Prb_y[E_s]\le2^{-(\lambda/40)\log_2 d}\le2^{-Ck}.
\end{equation}
Thus, since this repair-LLL has dependency degree bounded by $k$, it satisfies
the exponential criterion, and we can solve it directly by invoking \Cref{lem:exponential}
in $\poly(\log\log n)$ rounds w.h.p.
This algorithm runs on $G^{11}[S]$; each of its rounds can be simulated
in at most $11$ rounds on $G$, so the round complexity remains
$\poly(\log\log n)$.
Finally, avoiding all repair-LLL events $E_s$ solves our original
LLL problem $\Pi$, because every event with an erased variable is avoided,
and all other events remain unviolated.

We also comment that the two invoked lemmas (\Cref{lem:ruling}, used to
choose the danger centers, and \Cref{lem:exponential}, used to solve the
repair-LLL) can each be run
with a sufficiently small inverse-polynomial failure probability, so their
combined failure probability is at most $n^{-a}$ for any desired constant $a>0$.
\end{proof}

\subsection{Proof of \texorpdfstring{\Cref{thm:main}}{Theorem 1.1}}\label{sec:recursion}

\begin{proof}[Proof of \Cref{thm:main}]
First, we set aside the easy constant-degree case. Let $d_{\min}$
be the constant in \Cref{lem:amplification}.
If $d<d_{\min}$, use Fischer--Ghaffari
\cite[Theorem 1.1]{FG17} with the network decomposition of
\cite{RG20}; its criterion holds since $p(ed)^{32}\le e^{32}d^{-968}<1$.
This takes $\poly(\log\log n)$ rounds; see also
\cite[Corollary 3.9]{RG20}.
A flaw in the shattering analysis of \cite{FG17} was rectified in
\cite[Section 5]{GHMN26}.
Henceforth assume $d\ge d_{\min}$.

We solve the LLL problem by iteratively invoking
\Cref{lem:amplification}. Choose a constant $a$ large enough that
the final union bound gives failure probability at most $n^{-\gamma}$.
In iteration $i$, we have an $n$-event problem $\Pi_i$ on $G_i$,
with dependency degree upper bound $d_i$ and event probability
upper bound $p_i=d_i^{-\lambda_i}$. Initially,
$\Pi_0=\Pi$, $G_0=G$, $d_0=d$, $\lambda_0=1000$, and
$p_0=d^{-1000}$, which bounds the original event probabilities
by the theorem's hypothesis.

For each iteration $i$ with $p_i>n^{-a}$, invoke
\Cref{lem:amplification} on $\Pi_i$ and define $\Pi_{i+1}$ to be
the resulting problem $\Pi'$. Its parameters are
\[
 G_{i+1}=G_i^{23},\qquad d_{i+1}=d_i^{23},\qquad
 \lambda_{i+1}=\lambda_i^2/700,\qquad
 p_{i+1}=d_{i+1}^{-\lambda_{i+1}}.
\]
The lemma applies because $d_i\ge d_0\ge d_{\min}$ and
$\lambda_i\ge1000$, while $p_i>n^{-a}$ gives
$\lambda_i\log d_i<a\log n$. Hence, $\lambda_i\le a\log n$.
We recursively solve $\Pi_{i+1}$. By the lemma, any solution of
$\Pi_{i+1}$ can then be transformed into a solution of $\Pi_i$
in $\poly(\log\log n)$ additional rounds on $G_i$, with high
probability. All problems retain the same $n$ vertices and
underlying independent variables.

In the \emph{terminal case}, $p_i\le n^{-a}$, we sample the
variables once from their original distributions. A union bound
over the $n$ events shows that this solves $\Pi_i$ with failure
probability at most $np_i\le n^{1-a}$. We then apply the lemma's
transformations in reverse order to obtain a solution of $\Pi_0$.

To bound the number of iterations, the recurrences give
$G_i=G^{23^i}$, $d_i=d^{23^i}$, and
$\lambda_i=700(10/7)^{2^i}$. Since $\lambda_i\log d_i\ge a\log n$
ensures termination, the first terminal level is $L=O(1+\log\log\log n)$.
Let $T_i$ denote the total time of the recursive algorithm for solving $\Pi_i$,
measured in rounds of $G_i$, and let
$b(n)=\poly(\log\log n)$ bound the construction and transformation
costs in \Cref{lem:amplification}.
Then $T_L=O(1)$ and $T_i\le23T_{i+1}+b(n)$, giving
$T_0=O(23^L b(n))=\poly(\log\log n)$.
Finally, we run each transformation with failure probability at most
$n^{-a}$. Since the lemma can use any solution of $\Pi_{i+1}$ to
solve $\Pi_i$, this bound holds conditional on all preceding steps
succeeding. A union bound
over the $L$ transformations and the terminal sample gives total
failure probability at most $n^{1-a}+Ln^{-a}\le n^{-\gamma}$ for
sufficiently large $a$.
\end{proof}

\section{Further Context, Related Work, and Applications}
\label{sec:context}

\subsection{History of LLL} 

\paragraph{Probabilistic Method.}
The \LLL was introduced by Erd\H{o}s and Lovász \cite{EL74} as a tool to prove the existence of combinatorial objects using the probabilistic method. Their proof was purely existential and applied to abstract events without reference to variables.
Shortly after, Spencer presented a generalization of LLL, credited to Lovász and known as the asymmetric criteria, that allows non-uniform bounds on error probabilities
\cite{spencer1977asymptotic}. Shearer then computed the supremum of the $p$ for which the existential conclusion of LLL remains valid, for a fixed dependency degree $d$ \cite{shearer1985problem}. These results have found many applications in combinatorics; see, e.g., \cite[Chapter 5]{AS08} or \cite[Chapter 4]{MR02}.

\paragraph{Constructive LLL.}
Beck initiated the study of the constructive LLL in which one aims to compute an assignment of an LLL's variables that avoids all bad events \cite{Beck_rsa91a}. In the same year Alon adapted Beck's approach to design efficient deterministic parallel algorithms for LLL type problems, most prominently for hypergraph $2$-coloring~\cite{Alon91}. In two decades various groups of researchers published various algorithms for LLL type problems, e.g.,  Molloy and Reed \cite{MR_stoc98}, Czumaj and Scheideler \cite{CS00}, Srinivasan \cite{Srinivasan08}, or Pach and Tardos \cite{PT09}. 

A breakthrough algorithm of Moser and Tardos was published in 2010 \cite{MoserTardos10}. It shows that after an initial sampling of all variables, a logarithmic number of resampling a maximal independent set of violated events is sufficient to compute a satisfying assignment.
Over the following decade, their framework was substantially developed in several directions: extending the analysis to stronger LLL criteria~\cite{KS11,Pegden14}, generalizing resampling to more complex probability spaces \cite{HV20}, studying the properties of the resulting distributions \cite{HSS11,HS17}, and developing deterministic and more efficient parallel algorithms \cite{HH17,CGH13,Harris17}.

\subsection{The Importance of LLL in Distributed Computing}\label{sec:lll-importance}

The LLL is inherently local in nature, replacing the global requirements of the union bound to ensure global correctness with conditions only involving local dependencies.  This makes it particularly well suited to the distributed LOCAL model, where algorithms must achieve global correctness through local computation and communication.

\paragraph{Completeness of LLL for sublogarithmic computation.}
In their seminal work \cite{CP19} Chang and Pettie formalized the importance of LLL for distributed algorithms. They showed that the LLL can be used to amplify the success probabilities of distributed algorithms. More precisely, they showed that any sublogarithmic-time distributed algorithm $\mathcal{A}$ with a suitable (relatively large) error probability at each node can be automatically turned into an algorithm with a \emph{high probability guarantee} by an efficient reduction to a single LLL instance. The objective of the LLL is to find random bits for $\mathcal{A}$ with which it does not err at any node. The resulting algorithm succeeds globally whenever the LLL succeeds, even if the original local error probabilities are too large for a union bound. See \Cref{thm:LLL-speedup} for more details and the guarantees obtained by using \Cref{thm:main} in this reduction. 

As an immediate corollary they obtain that the constructive LLL problem is complete for sublogarithmic computation for all so-called locally checkable labeling problems (LCLs). This yields a fundamental gap in the randomized complexity landscape of LCLs:  Any LCL problem either requires $\Omega(\log n)$ rounds or it can be solved in the same time as the constructive LLL with some polynomial criterion. 

\emph{LCLs} are central to the field, and example problems in the class are the maximal independent set problems, many vertex and edge coloring problems, but also many edge-orientation problems. Informally, an LCL problem is a problem on constant-degree graphs whose solution can be verified in constant time. The problem class was introduced in the 90's by Naor and Stockmeyer~\cite{NS95} and after the seminal work of \cite{CP19} countless publications have investigated the complexity landscape of these problems, e.g., \cite{CP19,CKP19,RG20,BBOS20,BBOS21,C20,GRB22}.

\paragraph{Distributed LLL algorithms.}
The Moser--Tardos framework yields an $O(\log^2 n)$-round distributed algorithm under criterion $(1+\eps)ep(d+1)<1$.
Chung, Pettie, and Su improved this to $O(\log n)$ when $(1+\eps)epd^2<1$ \cite{CPS17}.

Fischer-Ghaffari's algorithm, combined with Rozhoň–Ghaffari’s deterministic network decomposition \cite{RG20}, runs in $O(d^2) + \poly(\log\log n)$ rounds \cite{FG17}, for criteria $p(ed)^{32} < 1$.
Davies \cite{Davies_soda23} improved the first term to $d/\log d$.

Degree-independent $\poly(\log\log n)$-round algorithms were previously known under an exponential criterion \cite{GHK18,RG20}, extended to weaker criteria by Davies  \cite{Davies_soda23}, and such algorithms have also been developed for specific applications such as 
slack generation~\cite{HM24,HMP24}, degree splitting~\cite{HMN_disc22,Davies_soda23}, coloring triangle-free graphs~\cite{HMP24,DZ26}, and coloring with fewer than $\Delta+1$ colors~\cite{FHJM26}. 

\paragraph{Lower Bound and the Chang-Pettie Conjecture.}
In \cite{BFHKLRSU_stoc16}, the authors gave an $\Omega(\log_{\Delta}\log n)$ lower bound for LLL in randomized LOCAL, even when $p2^d \leq 1$.
Chang and Pettie conjectured $O(\log\log n)$ was the true complexity under polynomial criterion on bounded-degree graphs. This holds for LLLs with tree-structured dependency graphs even in the case of unbounded degrees \cite{CHLPU20}.

\paragraph{Other Models.}
LLLs have also been studied in other models: the bandwidth-constrained CONGEST model \cite{MU21,HMP24}, CONGESTED-CLIQUE and MPC \cite{Davies-Peck25}, and the Local Centralized Algorithm (LCA) model \cite{BGR21,achlioptas2020,Davies-Peck25}.

\subsection{Relation to previous techniques}
Our proof is based on exponent amplification for the LLL, which we achieve
by controlling conditional probabilities under retractions. Both ideas
are inspired by prior work, but we develop them quite differently here.
Next, we discuss the precedents for each idea and how our approach differs.

\paragraph{Exponent amplification.}
Fischer and Ghaffari~\cite{FG17} introduced exponent amplification as an
approach to building faster algorithms for the LLL. They first developed
a sublogarithmic LLL algorithm whose running time improves with the
criterion exponent (in an exponential way). They then viewed this algorithm's run
as a new LLL with a much larger criterion exponent, borrowing the idea from
Chang and Pettie's completeness result~\cite{CP19}, outlined in
Section~\ref{sec:lll-importance}. This enabled their base algorithm to
solve the new instance significantly faster.
Ghaffari, Harris, and Kuhn~\cite[Sections~5.3--5.4]{GHK18} formalized
this bootstrapping principle, extended the fast algorithms to a larger
range of degrees, and iterated the amplification to further improve the
dependence on $n$.
Our exponent-amplification lemma is inspired by this perspective.
Our lemma requires no upper bound on the dependency degree, and our amplification
transforms an LLL with $p=d^{-\lambda}$ into a new one with
$p'=(d')^{-\lambda'}$ in $O(1)$ rounds, with
$\lambda'=\Omega(\lambda^2)$, while passing only to $G'=G^{O(1)}$.
This quadratic exponent amplification is the key driver of our result.

\paragraph{Conditional probabilities under retractions and dangerous events.}
Our exponent amplification is built on the idea of controlling how partial retractions
change event probabilities, particularly through notions of dangerous
events whose marginals under retraction can become too high.
Similar ideas have been used before.
Fischer--Ghaffari~\cite{FG17} gradually set variables and freeze remaining variables of an event when its conditional probabilities become too
large. Chung--Pettie--Su~\cite{CPS17} and
Ghaffari--Harris--Kuhn~\cite{GHK18} directly consider partial retractions
and how they may increase the conditional probability of a bad event.
Their approaches must handle all $2^d$ possible retraction patterns around
an event, leading to speedups for LLL instances satisfying exponential criteria.
Ghaffari, Harris, and Kuhn~\cite{GHK18} and Davies~\cite{Davies_soda23}
further studied conditional event probabilities under certain structural
retractions through the notions of \emph{fragility} and \emph{resilience},
respectively; these are particularly useful for events involving sums of
independent random variables.

Our construction restricts the number of relevant local retraction patterns
with a local independent-packing constraint.
Specifically, our definition of dangerousness considers only retractions generated by at most $k$ nearby centers,
giving $d^{O(k)}$ possible retractions.
The LLL $\Pi'$ forbids independent packing of $k+1$ dangerous events
in a constant-radius neighborhood (Condition~\eqref{eq:target}).
Avoiding the bad events of $\Pi'$ ensures that every retraction in the repair step is covered  by our definition of dangerousness and that the resulting
repair LLL has dependency degree $O(k)$.
Crucially, a violation of $k$ or more independent nearby constraints requires $k$ independent
dangerous events, which pushes this $k$ into the exponent and yields a
bad-event probability of at most $d^{-\Omega(\lambda k)}$ in $\Pi'$
(see \cref{eq:amplified-probability}).
Choosing $k=\Theta(\lambda)$ therefore yields quadratic exponent amplification
and a repair instance satisfying an exponential criterion.
We note that we cannot take $k$ much larger: the union bound over
$d^{O(k)}$ retraction patterns limits us to $k=O(\lambda)$.
Moreover, $k$ bounds the dependency degree of the repair LLL, which must
satisfy the exponential criterion.

\subsection{Further Corollaries}
Many algorithms adapt combinatorial results to the LOCAL model. In the absence of a general sub-logarithmic LLL algorithm, the technique was to carefully analyze each LLL, prove that it shatters and that post-shattering instances can be resolved efficiently. \cref{thm:main} supplies a general solver when the application satisfies the criteria: $(\Delta-k_\Delta+1)$-coloring \cite{FHJM26}, triangle-free graphs \cite{PS15,HMP24,DZ26}, degree splitting \cite{HMN_disc22}, frugal coloring \cite{CPS17}, and coloring locally sparse graphs \cite{AKS_jct99,Vu_jgt99}.

The first direct corollary, by using the speed-up technique from \cite{CP19}, is a technique for attenuating the error probability of LOCAL algorithms. Prior to this work, the higher cost of LLL introduced additional dependency on $\Delta$.
\begin{theorem}[{\cite[Theorem 4.1]{CP19} with \cref{thm:main}}]
    \label{thm:LLL-speedup}
    Let $\Delta \geq 3$ and $\mathcal{A}$ be a $T(n,\Delta)$-round randomized LOCAL algorithm such that each vertex can check in $R(n,\Delta)$ rounds if its output is correct and every node is incorrect with probability at most $p$. If $p\Delta^{2000(T(n,\Delta)+R(n,\Delta))} < 1$, there is a $(T(n,\Delta) + R(n, \Delta)) \cdot \poly(\log \log n)$ round algorithm that produces an output of $\mathcal{A}$ such that all nodes are correct with high probability.
\end{theorem}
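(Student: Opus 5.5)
We follow the Chang--Pettie reduction~\cite[Theorem 4.1]{CP19}: turn the random bits of $\mathcal A$ into an LLL and solve it with \Cref{thm:main}. Write $T=T(n,\Delta)$ and $R=R(n,\Delta)$.

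\textbf{The LLL instance.} The variables are the private random strings $r_u$ of the nodes $u$, which are mutually independent. For each node $v$, let the bad event $A_v$ be that $v$'s output is incorrect. The output of a node $w$ is a function of $\{r_u: u\in\B(w,T)\}$. Checking correctness at $v$ uses the outputs of nodes within distance $R$ of $v$. Hence $A_v$ is determined by $V_v=\{r_u:u\in\B(v,T+R)\}$.

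\textbf{Parameters.} Two events $A_v,A_w$ share a variable only if $\dist(v,w)\le 2(T+R)$. So the dependency graph is a subgraph of $G^{2(T+R)}$ and has degree at most $d=\Delta^{2(T+R)}$, since $\Delta\ge3$ and $|\B(v,r)|\le\Delta^r$. By hypothesis $\Prb[A_v]\le p$. Therefore
\[
pd^{1000}=p\Delta^{2000(T+R)}<1,
\]
which is exactly the criterion of \Cref{thm:main}. The event count is $n$, and each node can compute $d$ and $p$ from the known parameters. If $T+R=0$, the events have disjoint variables; we then take $d=2$, and the same bound still holds.

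\textbf{Running the algorithm.} One round on the dependency graph, here $G^{2(T+R)}$, is simulated by $O(T+R)$ rounds on the network. Each vertex $v$ can gather the distributions of its variables and a description of $A_v$ in $O(T+R)$ rounds.

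The assignment of each shared variable $r_u$ must be consistent, and we need to pin down who owns it. Both LLL solvers used by \Cref{thm:main} are variable-based, and any output of those solvers is consistent across events. So we let node $u$ own $r_u$ and output its final value.

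With these pieces, \Cref{thm:main} runs in $\poly(\log\log n)$ dependency-graph rounds. This gives $(T+R)\cdot\poly(\log\log n)$ network rounds in total. Afterwards, every node runs $\mathcal A$ for $T$ more rounds with the fixed bits, and no event $A_v$ occurs with high probability.

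\textbf{Main obstacle.} The only delicate point is the formal fit with the model of \Cref{thm:main}. We require the following:
\begin{itemize}
\item the communication graph is the dependency graph, which holds here up to the $O(T+R)$ simulation overhead;
\item nodes know the upper bounds $n$, $p$ and $d$;
\item the probability space is a product of independent variables.
\end{itemize}
The random strings may be unbounded. We handle this by truncating each $r_u$ to the finitely many bits that $\mathcal A$ uses, or equivalently by treating each $r_u$ as one abstract variable. In both cases independence is preserved.
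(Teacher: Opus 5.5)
Your proposal is correct and follows the paper's proof essentially verbatim. As in the paper, node failures are the bad events over the nodes' independent random strings, each $A_v$ depends only on strings in $\B(v,T+R)$, so the dependency graph lies in $G^{2(T+R)}$ with degree at most $\Delta^{2(T+R)}$; the criterion of \Cref{thm:main} becomes $p\Delta^{2000(T+R)}<1$, and simulating the dependency graph costs a factor $O(T+R)$.

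There are two harmless slips. First, $|\B(v,r)|\le\Delta^r$ can fail for $r\le2$; only the degree bound $|\B(v,r)|-1\le\Delta^r$ holds, but that is all you use. Second, for $T+R=0$, choosing $d=2$ does not turn $p<1$ into $p\cdot2^{1000}<1$. That case is trivial anyway, since the events have disjoint variables and each node can fix its own bits locally.
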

\begin{proof}
    All ingredients of the proof are from \cite[proof of Theorem 4]{CP19}. We omit the last steps of their argument, which relies on the condition that $T(n,\Delta) \leq f(\Delta) + \eps \log_\Delta n$ for a sufficiently small constant $\eps$, which we do not assume.
    Let failures of nodes in $\mathcal{A}$ be the bad events. Failure of a node is a function of outputs within distance $R(n,\Delta)$, and the output of each node is a function of its distance-$T(n,\Delta)$ neighborhood. As a result a bad event only shares variables with at most $\Delta^{2(T(n,\Delta)+R(n,\Delta))}$ other bad events.
    If $p\Delta^{2000(T(n,\Delta)+R(n,\Delta))} < 1$, we have an LLL s.t.\ $p < d^{-1000}$, an LLL solvable using \cref{thm:main} in $\poly(\log \log n)$ rounds with high probability. The $(T(n,\Delta) + R(n, \Delta))$ multiplicative factor is the overhead of simulating communication over the LLL graph using communication over the original graph.
\end{proof}

For graphs of constant degree and LCLs (for which $R(n,\Delta) = R(\Delta)$) \cite{CP19} uses an additional argument to replace the multiplicative dependency in $T(n,\Delta)$ by a term that is a pure function of $\Delta$ when $T(n,\Delta) \in o(\log_\Delta n)$.
Combined with earlier algorithms for LLL~\cite{FG17,RG20,Davies_soda23}, this already implied a gap between $\poly(\log\log n)$ and $O(\log n)$ in the complexity hierarchy of LCLs. Our result merely reduces the dependencies in $\Delta$ hidden in the $\poly(\log\log n)$.

\Cref{thm:LLL-speedup} only restates a theorem by Chang and Pettie when plugging in the complexity of our algorithm for LLL. As the complexity of earlier algorithms for LLL with a polynomial criterion grows significantly with the dependency degree $d$ of the LLL, we obtain a much better complexity when $d$ is large for the LLL that appears in the proof. This is true for many regimes of $T$, $R$, and $\Delta$.

\begin{theorem}
    \label{thm:error-attenuation}
    Suppose there exists a randomized LOCAL algorithm to solve a problem with error probability $n^{-1000}$ in $T(n)$ rounds and whose output can be checked in $T(n)$ rounds. Then, for all $\eps(n) \in (0,1)$, there exists an algorithm for solving the same problem in  $T(n) \poly(\log \log (n/\eps))$ rounds and with error probability at most $\eps(n)$.
\end{theorem}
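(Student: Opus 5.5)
Given $\eps(n)$, the plan is to run the algorithm on a virtual network of size $N=\max\{n,\lceil 1/\eps\rceil\}$, made by padding with isolated vertices as in the preliminaries. All nodes know $n$ and $\eps(n)$, so they can compute $N$. We then apply the Chang--Pettie-style reduction of \Cref{thm:LLL-speedup} with \Cref{thm:main} run at size parameter $N$. The case $\eps\ge n^{-1000}$ is trivial: run the base algorithm once. So assume $\eps<n^{-1000}$.

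\textbf{Step 1 (setting up the LLL).} Treat the random bits of the base algorithm $\mathcal A$ at each node as the variables. For node $v$, let $A_v$ be the event that the $T(n)$-round check at $v$ reports an error. This event depends on the random bits within distance $2T(n)$ of $v$. Two such events share variables only if the nodes are within distance $4T(n)$, so the dependency graph is $G^{4T(n)}$, and its degree $d$ is at most the number of nodes in a $4T(n)$-ball, hence $d\le n$. By hypothesis, every event has probability at most $p\le n^{-1000}$.

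\textbf{Step 2 (checking the criterion).} We need $pd^{1000}<1$. The bound $d\le n$ only gives $pd^{1000}\le 1$, so we need some slack. I would get it by running $\mathcal A$ with size parameter $n^2$ (or a padded $n$) so that the local failure probability is at most $n^{-2000}$, assuming $\mathcal A$ works for any given size bound. Alternatively, we can absorb a constant: simulate a few independent copies of $\mathcal A$ and let each node pick the first copy that checks out locally. The cleanest route is to read ``error probability $n^{-1000}$'' as holding for the declared size bound, so replacing $n$ by $n^2$ costs only a constant factor inside $T$, assuming $T$ is polynomially well-behaved. This is a technical point I would settle by stating the convention explicitly.

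\textbf{Step 3 (boosting to $\eps$).} Apply \Cref{thm:main} to this LLL, but with size upper bound $N$. This is legitimate: the events are at most $n\le N$ in number, and nodes may pad with isolated dummy events. The success probability is then at least $1-N^{-\gamma}\ge 1-\eps$ for $\gamma=1$. The round count is $\poly(\log\log N)$ rounds on $G^{4T(n)}$, and each such round costs $O(T(n))$ rounds on $G$. Since $\log\log N\le \log\log(n/\eps)+O(1)$, this gives $T(n)\poly(\log\log(n/\eps))$ in total. Finally, each node outputs $\mathcal A$'s output under the LLL-chosen bits. All checks then pass, which by definition of the check means the output is correct.

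\textbf{Main obstacle.} The delicate step is the first half of Step 3. \Cref{thm:main} is stated with $n$ serving as both the number of events and the size bound known to the nodes. I need that feeding it a larger declared size $N$, possibly exponential in $n$ when $\eps$ is tiny, preserves correctness. In particular, the dependence on $n$ in \Cref{lem:ruling}, \Cref{lem:exponential}, and the recursion depth must be only through the declared bound, and the LOCAL model must allow messages and computation of that size, which it does since both are unbounded. I would verify this by noting that the preliminaries already treat $n$ as an arbitrary global upper bound obtained by padding with isolated vertices.
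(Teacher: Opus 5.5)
Your overall route is the paper's: treat the per-node failure events of $\mathcal A$ as an LLL whose dependency degree is bounded only by the number of nodes, then run \Cref{thm:main} with an inflated declared size bound $N$ so that $N^{-1}\le\eps$. The paper uses $N=\lceil n/\eps\rceil$; your $\max\{n,\lceil 1/\eps\rceil\}$ works equally well. For your ``main obstacle,'' the paper simply runs \Cref{thm:main} with $N$ as the declared bound, relying on the same padding convention you cite.

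The one real misstep is Step 2, and you should delete it rather than patch it. The dependency degree counts \emph{other} events, and there are at most $n$ events, so $d\le n-1$, a bound every node can compute from $n$. Hence $pd^{1000}\le n^{-1000}(n-1)^{1000}=(1-1/n)^{1000}<1$ strictly; this is exactly how the paper checks the criterion.

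The workarounds you list are not just unnecessary but problematic as written:
\begin{itemize}
\item Rerunning $\mathcal A$ with size parameter $n^2$ costs $T(n^2)$ rounds for both the algorithm and its check. The claimed $T(n)\poly(\log\log(n/\eps))$ bound would then need an extra assumption such as $T(n^2)=O(T(n))$, which the statement does not grant.
\item Letting each node pick the first independent copy that passes its local check is unsound. A passing check in one copy says nothing about compatibility between a node using copy~1 and a neighbor using copy~2; for instance, two adjacent nodes can end up with the same color.
\end{itemize}
With Step 2 replaced by $d=n-1$, the rest of your argument goes through as stated.
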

For example, this allows us to solve problems with error probability $\eps = 2^{-\log^{\gamma} n}$
for any constant $\gamma$ of one's choice in $T(n)\poly(\log \log n)$ rounds.

\begin{proof}
    When considering an LLL where each bad event stems from a node, the dependency degree is trivially bounded by $n-1$. With an error probability $\leq n^{-1000}$, \cref{thm:main} applies. By running the algorithm of \cref{thm:main} using $N=\lceil n/\eps(n) \rceil$ as upper bound on the size of the graph, one obtains error probability at most $1/N \leq \eps(n)$.
\end{proof}

\paragraph{Edge Coloring.}
\cite{CHLPU20} showed how to compute edge colorings with $\Delta+\tilde{O}(\sqrt{\Delta})$ colors by sequentially invoking $O(\log \Delta)$ LLLs with a polynomial criterion; these can be solved in $\poly(\log\log n)$ rounds via \Cref{thm:main}.
\begin{theorem}[{\cite[Theorem 4]{CHLPU20} with \cref{thm:main}}]
There is a $\poly(\log\log n)$-round randomized LOCAL algorithm that w.h.p.\ computes a $\Delta+O(\sqrt{\Delta}\log^3 \Delta)$-edge coloring of any graph with maximum degree $\Delta\leq \log^{O(1)}n$. 
\end{theorem}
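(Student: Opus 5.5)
The plan is to invoke the edge-coloring reduction of \cite[Theorem 4]{CHLPU20} as a black box and replace each of its LLL subroutine calls by \Cref{thm:main}. As I recall, that reduction runs $O(\log\Delta)$ phases. In each phase, the current uncolored graph of maximum degree $\Delta_i$ is partially colored by a random nibble-type procedure whose local failures form an LLL. The edge-degrees shrink geometrically, $\Delta_{i+1}\approx\Delta_i/2$, and the palette overhead accumulates to $O(\sqrt\Delta\log^3\Delta)$. Once the degree is polylogarithmic in $\Delta$, the remaining edges are finished greedily with the leftover colors. I would first restate this structure precisely. For each phase I would check that the bad events, the variables, and the dependency degree $d_i\le\poly(\Delta)$ match the input format of \Cref{thm:main}. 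I would also check that the failure probabilities satisfy $p_i\le\exp(-\Delta_i^{\Omega(1)})$, or at least $p_i d_i^{1000}<1$, which is the only property \Cref{thm:main} needs. If the original events give only a weaker polynomial criterion, I would use the standard fix of strengthening the concentration bounds (Chernoff with smaller deviation slack) at the cost of the polylogarithmic factors in the color count.

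The main obstacle is the late phases, where $\Delta_i$ becomes small, say $\poly\log\Delta$. There a failure probability $\exp(-\Delta_i^{\Omega(1)})$ need not beat $d_i^{-1000}$. I would handle this as \cite{CHLPU20} presumably does: stop the LLL phases once $\Delta_i$ drops below a threshold $\Delta^\ast=\poly\log\Delta$ chosen so that the criterion $p_i d_i^{1000}<1$ still holds. The rest would then be colored with the $O(\Delta^\ast)$ reserved colors by a deterministic $(2\Delta^\ast-1)$-edge-coloring. That step takes $\poly(\Delta^\ast)+O(\log^\ast n)$ rounds, which is $\poly(\log\log n)$ since $\Delta\le\log^{O(1)}n$. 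This regime is where I would also use the assumption $\Delta\le\log^{O(1)}n$: it guarantees that every quantity polynomial in $\Delta$ or $\log\Delta$ is $\poly(\log\log n)$ for the greedy cleanup, and that $\log\Delta=O(\log\log n)$.

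For the round complexity, each phase's LLL lives on a dependency graph whose edges span $O(1)$ hops in the line graph. So one LLL round is simulated in $O(1)$ rounds of $G$, and \Cref{thm:main} gives $\poly(\log\log n)$ rounds per phase. Summed over $O(\log\Delta)=O(\log\log n)$ phases, plus the final cleanup, this is still $\poly(\log\log n)$. For correctness, each invocation of \Cref{thm:main} uses failure probability $n^{-\gamma}$ with $\gamma$ large. A union bound over the $O(\log\log n)$ phases then gives overall success w.h.p., because each phase's guarantee holds for any input produced by a successful previous phase.
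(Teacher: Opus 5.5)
Your proposal is correct and is essentially the paper's argument: use \cite[Theorem 4]{CHLPU20} as a black box, solve each of its $O(\log\Delta)=O(\log\log n)$ sequential polynomial-criterion LLL instances (whose dependency edges span $O(1)$ hops) with \Cref{thm:main} in $\poly(\log\log n)$ rounds, and take a union bound over the phases. Your reconstruction of the late phases and the cleanup goes beyond what this black-box composition needs; and if a criterion exponent had to be raised to reach $pd^{1000}<1$, you would allow \emph{larger} Chernoff deviations (not smaller), which changes only the constant in the $O(\sqrt{\Delta}\log^3\Delta)$ term, not its polylogarithmic factor.
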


\paragraph{Vertex Degree Splitting.}

A $k$-\emph{vertex splitting} of discrepancy $\delta$ of a graph $G=(V,E)$ is a partition of its vertices $V$ into $k$ sets $V_1,\dots,V_k$ s.t.\ for each node $v \in V$ and $i \in [k]$, $\abs{\card{N(v) \cap V_i} - \deg(v)/k} \leq \delta$. That is, the neighbors of each node are split roughly equally between the $k$ parts.
\begin{theorem}
\label{thm:splitting}
For any $k\in[2,\Delta/(6400\ln \Delta)]$, there is a $\poly(\log \log n)$-round randomized LOCAL algorithm that w.h.p.\ computes a $k$-vertex splitting with discrepancy $80 \sqrt{(\Delta/k)\ln \Delta}$.
\end{theorem}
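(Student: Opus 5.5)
Cast the splitting as an LLL and invoke \Cref{thm:main}. The natural approach is to color each vertex independently and uniformly with one of $k$ colors and, for each vertex $v$, define a bad event $A_v$: some part $i$ has $\abs{\card{N(v)\cap V_i}-\deg(v)/k}>\delta$, where $\delta=80\sqrt{(\Delta/k)\ln\Delta}$. The variables of $A_v$ are the colors of $N(v)$. Two events share a variable only if $v,w$ have a common neighbor, so the dependency degree is at most $d=\Delta^2$. The main task is therefore to show $\Prb[A_v]\le d^{-1000}=\Delta^{-2000}$, after which \Cref{thm:main} gives $\poly(\log\log n)$ rounds on $G^2$, hence on $G$.

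For the probability bound, fix $v$ and $i$. The quantity $\card{N(v)\cap V_i}$ is a sum of $\deg(v)\le\Delta$ independent Bernoulli$(1/k)$ variables with mean $\mu=\deg(v)/k\le\Delta/k$. I would apply a Chernoff bound in the form $\Prb[\abs{Z-\mu}>\delta]\le 2\exp(-\delta^2/(3\max(\mu,\Delta/k)))$, which is valid when $\delta\le\Delta/k$. Using the upper bound $\Delta/k$ on the mean avoids problems when $\deg(v)$ is small: pad with dummy trials, or use the form of the bound for $\mu\le\mu_{\max}$. This gives $2\exp(-6400\ln\Delta/3)\le 2\Delta^{-2133}$. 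A union bound over the $k\le\Delta$ parts yields $\Prb[A_v]\le 2\Delta^{-2132}<\Delta^{-2000}$, with room to spare for constants.

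The main obstacle is checking that $\delta\le\Delta/k$, so that the Chernoff regime is correct. This follows from $k\le\Delta/(6400\ln\Delta)$: since $(\Delta/k)\ln\Delta\le(\Delta/k)^2/6400$, we get $\delta\le 80\cdot(\Delta/k)/80=\Delta/k$. I would also handle small $\Delta$ (or $d<2$) trivially, and note that the exact constants in Chernoff must be tracked so the exponent exceeds $2000$.

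Finally, the distributed simulation is direct. Each vertex owns its color variable. \Cref{thm:main} runs on the dependency graph, which is a subgraph of $G^2$, so each of its rounds costs $2$ rounds on $G$. The output holds w.h.p.
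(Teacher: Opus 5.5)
Your proposal is correct and follows the paper's proof. Both use a uniformly random $k$-partition and a Chernoff bound at scale $\Delta/k$, valid because $k\le\Delta/(6400\ln\Delta)$ forces $\delta\le\Delta/k$ (the paper's condition $\eps\le 1$), then a union bound over the $k$ parts and $d\le\Delta^2$, so that $p<\Delta^{-2000}\le d^{-1000}$ and \Cref{thm:main} applies. Of your two ways to justify using $\Delta/k$ in place of the true mean, keep the $\mu\le\mu_{\max}$ form of Chernoff; padding with dummy trials does not by itself turn a deviation of $\card{N(v)\cap V_i}$ into a deviation of the padded sum.
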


In particular, if splitting into two colors ($k=2$), \cref{thm:splitting} results in the nodes having 
two-way discrepancy $O(\sqrt{\Delta\log\Delta})$
in $\poly(\log \log n)$ time.
The same fine 2-split runs in time $O(\Delta/\log \Delta) + \poly(\log \log n)$ in prior work~\cite[Thm.~19]{Davies_soda23}.

\begin{proof}
Let each node join a bucket picked uniformly at random. In expectation, a node $v$ has $\deg(v)/k$ neighbors joining each bucket. For any $\eps \in (0,1]$, a Chernoff bound bounds the probability of getting a deviation greater than $\pm \eps\Delta/k$ by $\leq 2\exp(-\eps^2 \Delta/(3k))$.
We set $\eps$ of order $\sqrt{(k/\Delta)\log \Delta}$.

The bad event at each node is to have a large discrepancy for one of the buckets. This LLL has a dependency degree of $d\leq \Delta + \Delta(\Delta-1) = \Delta^2$.
The maximum probability $p$ of a bad event is bounded by $2k\exp(-\eps^2 \Delta/(3k))$, and we need $p < d^{-1000}$ to apply the algorithm for LLL.
By setting $\eps = 80\sqrt{(k/\Delta) \ln \Delta}$, which is $\leq 1$ per the maximum value of $k$, one gets $p \leq 2 \Delta \exp(-(80^2 /3) \ln \Delta) < \Delta^{-2000} \leq d^{-1000}$. The discrepancy is limited to $\eps \Delta/k = 80\sqrt{(\Delta/k) \ln \Delta}$.
\end{proof}

\paragraph{Domatic partition.}
A $k$-\emph{domatic partition} of a graph $G=(V,E)$ is a partition of $V$ into $k$ disjoint sets $V_1,\dots,V_k$ s.t.\ each $V_i$ is a dominating set of $G$. That is, for every node $v$ and $i \in [k]$, either $v\in V_i$ or $N(v) \cap V_i \neq \emptyset$.
The \emph{domatic number} $\dom(G)$ of a graph is the largest $k$ s.t.\ $G$ admits a $k$-domatic partition. With $\delta$ and $\Delta$ respectively the minimum and maximum degree in $G$, trivially $\dom(G) \leq \delta+1$, and it is known that $\dom(G) \geq (1-o(1)) \delta / \ln \Delta$~\cite{FHKS_sicomp02_domatic}.
\begin{theorem}
\label{thm:domatic}
Let $G$ be of minimum degree $\delta$, maximum degree $\Delta$, and let $k\leq \delta/(2001\ln\Delta)$. There exists a $\poly(\log \log n)$ randomized LOCAL algorithm for computing a $k$-domatic partition of $G$ with high probability.
\end{theorem}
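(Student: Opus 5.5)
The plan is to follow the template of \Cref{thm:splitting}: sample a uniformly random color for each vertex, observe that the bad events form a variable LLL with polynomial criterion, and apply \Cref{thm:main}.

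Each vertex $v$ independently picks a color $c(v)\in[k]$ uniformly at random; the variables of the LLL are these colors. For every vertex $v$ define the bad event $A_v$ that some color $i\in[k]$ is missing from the closed neighborhood $N^+(v)=N(v)\cup\{v\}$. If no $A_v$ occurs, then every class $V_i=c^{-1}(i)$ is dominating, so we obtain a $k$-domatic partition.

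Next we bound the probability. Since $|N^+(v)|\ge\delta+1$, a union bound over the $k$ colors gives
\[
\Prb[A_v]\le k(1-1/k)^{\delta+1}\le k\,e^{-\delta/k}.
\]
Because $k\le\delta/(2001\ln\Delta)$, we have $\delta/k\ge2001\ln\Delta$. Also $k\le\delta\le\Delta$, so
\[
\Prb[A_v]\le\Delta\cdot\Delta^{-2001}=\Delta^{-2000}.
\]

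Then we bound the dependency degree. The event $A_v$ depends on the colors in $N^+(v)$, so $A_u$ and $A_v$ share a variable only if $\dist_G(u,v)\le2$. Hence the degree satisfies $d\le\Delta+\Delta(\Delta-1)=\Delta^2$, exactly as in \Cref{thm:splitting}. Therefore
\[
p\le\Delta^{-2000}\le d^{-1000},
\]
and for a strict inequality, $pd^{1000}<1$ follows since $k e^{-\delta/k}$ is strictly below $k(1-1/k)^{\delta+1}$ or by a trivial slack.

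We must also ensure $d\ge2$. If $\Delta\le1$, then $k\le\delta/(2001\ln\Delta)$ forces $k$ to be trivial, taking $k\ge1$ for meaningfulness. For $k=1$ the statement is trivial: output $V_1=V$. So we assume $\Delta\ge2$ and $k\ge2$; padding $d$ up to $2$ is harmless.

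Finally, we invoke \Cref{thm:main}. It runs on the dependency graph $G^2$, and each of its rounds can be simulated in two rounds of $G$. This yields $\poly(\log\log n)$ rounds with high probability.

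The only delicate points are bookkeeping: the strictness of the criterion and degenerate small-$\Delta$ cases. These are handled as above, so no real obstacle is expected.
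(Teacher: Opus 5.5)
Your proposal is correct and follows essentially the same route as the paper: color uniformly at random, bound $p\le ke^{-\delta/k}\le\Delta^{-2000}$ and $d\le\Delta^2$, and invoke \Cref{thm:main}. Two small remarks: your closed-neighborhood event matches the domatic definition exactly, while the paper uses the slightly stronger open-neighborhood event, which changes nothing; and your strictness remark has the inequality inverted, since it is $k(1-1/k)^{\delta+1}$ that lies strictly below $ke^{-\delta/k}$ (alternatively note $k<\Delta$), and this is what gives $pd^{1000}<1$.
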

\begin{proof}
    If each node picks a color at random, the probability that no neighbor of a node $v$ picks a particular color is at most $(1-1/k)^{\deg(v)} \leq e^{-\deg(v)/k} \leq e^{-\delta/k}$.
    The probability that there exists a color absent from $v$'s neighborhood is at most $ke^{-\delta/k}$. For each node $v$, let this be the bad event at $v$, defining an LLL of probability $p \leq ke^{-\delta/k}$ and dependency degree $d \leq \Delta^2$.

    With $k \leq \delta/(2001 \ln \Delta)$, one gets $p \leq \Delta^{-2000} \leq d^{-1000}$. This makes it possible to use the $\poly(\log \log n)$-round LLL algorithm.
\end{proof}
Note that the number of parts of \Cref{thm:domatic} is within a multiplicative constant of what is existentially always possible.

\section*{Acknowledgements}

This work was supported in part by the Research Council of Finland, Grant 363558.
This research was funded in whole or in part by the Austrian Science Fund (FWF) \url{https://doi.org/10.55776/P36280}, \url{https://doi.org/10.55776/I6915}. For open access purposes, the author has applied a CC BY public copyright license to any author-accepted manuscript version arising from this submission.

MMH was supported by the Icelandic Research Fund, Grant 2511609.

\bibliographystyle{alphaurl}
\bibliography{ref-preamble,references}

\appendix
\section{Proofs of the Preliminary Lemmas}\label[appendix]{sec:prelim-proofs}

\subsection{Ruling sets under bounded local independence}\label[appendix]{sec:proof-ruling}

\begingroup\restatinglemmatrue
\renewcommand{\thetheorem}{\ref*{lem:ruling}}
\LemmaBoundedIndependenceRulingSets*
\endgroup
\begin{proof}[Proof of \Cref{lem:ruling}]
We initialize $S_0$ to be an independent $\beta$-ruling set for
$\beta=O(\log\log n)$. This can be computed using
Gfeller--Vicari~\cite{GV_RulingSetBoundedGrowth_podc07} and
Schneider--Elkin--Wattenhofer~\cite{SEW13} in $O(\log\log n)$ rounds
with high probability.
For $i = 0, 1, 2, \ldots, \beta-1$, we do the following.
Each $s\in S_i$ locally chooses a maximal independent set $U_s$ in
$H[\B_H(s,3)\setminus\B_H(S_i,1)]$, using information from $\B_H(s,4)$.
Put $Q=\bigcup_s U_s$, compute a maximal independent set $I$ of $H[Q]$, and let
$S_{i+1} = S_i\cup I$.

Each iteration preserves independence, since $Q\cap\B_H(S_i,1)=\varnothing$. After the $i$-th iteration, every vertex has $\dist_H(v, S_i) \leq \max\set{2, \beta - i}$. Let $v$ be a vertex with $\dist_H(v,S_i) \geq 3$ and $u$ the vertex at distance exactly $3$ from $S_i$ on the shortest path from $v$ to $S_i$. At the start of the $(i+1)$-th iteration, $u$ is within distance $1$ of $U_s$ for some $s\in S_i$; hence, within distance $2$ of $I$. Then $\dist_H(v, S_{i+1}) \leq \dist_H(v,u) + 2 = \dist_H(v, S_i) - 1 \leq \beta - (i+1)$. In particular, every vertex is at distance at most 2 from $S_\beta$.

To bound the round complexity, note that each $U_s$ has at most $k$ members.
For $v\in Q$, a member of $\B_H(v,1)\cap Q$ can be proposed only by
a center in $\B_H(v,4)$. There are at most $k$ such centers, since
$S_i$ is independent, so
$H[Q]$ has degree at most $k^2-1$. Therefore, the MIS algorithm of
\cite[Theorem 1.2]{Ghaffari_soda16}, with the finishing step of
\cite[Corollary 3.2]{RG20}, computes $I$ in
$O(\log(k+2))+\poly(\log\log n)=\poly(\log k, \log\log n)$ rounds.
A flaw in the shattering analysis of \cite{Ghaffari_soda16} was rectified in
\cite[Appendix B.2]{GHMN26}.
Other work takes $O(1)$ rounds per iteration; the total over
$\beta = O(\log\log n)$ iterations is $\poly(\log k, \log\log n)$.
\end{proof}

\subsection{Exponential-criterion LLL}\label[appendix]{sec:proof-exponential}
\begingroup\restatinglemmatrue
\renewcommand{\thetheorem}{\ref*{lem:exponential}}
\LemmaExponentialLLL*
\endgroup

\begin{proof}[Proof of \Cref{lem:exponential}]
Fix a desired failure exponent $a>0$. If $np\le n^{-a}$, one sample suffices by a union bound. Otherwise $d<(a+1)\log_2 n/C=O(\log n)$.

We use the terminology and procedure of
\cite[Definition 6.1 and Algorithm 2]{GHK18}.
For a partial assignment $y$, the \emph{marginal probability}
$\Prb_y[A_v]$ is the probability of $A_v$ when assigned variables are
fixed as in $y$ and unassigned variables are sampled independently
from their original distributions.
The \emph{retraction} $\ret(x,U)$ of a full assignment $x$ with respect to
an event set $U$ erases $\bigcup_{u\in U}V_u$ and keeps $x$ elsewhere.
Put $q=(ed)^{-3}$.
In this proof, $v$ is \emph{$q$-dangerous}\footnote{This definition of dangerous is different from the one in \Cref{sec:proof}.} under $x$ if its retracted-marginal
$\Prb_{\ret(x,U)}[A_v]\ge q$ for some $U\subseteq\B_G(v,1)$.
This includes $U=\varnothing$, so nodes of violated events are dangerous. 

Assign each variable a fixed incident vertex as owner and draw a random
assignment $X$ using independent randomness at the owners.
Let $M$ be the set of $q$-dangerous vertices under $X$ and set $Y=\ret(X,M)$.
The \emph{residual instance} consists of the events indexed by
$\mathcal A=\B_G(M,1)$, with the assigned entries of $Y$ fixed.
In the residual instance, erased variables have their original independent
distributions, while unerased variables retain their values in $Y$.
For every $v$, retraction by $M$ affects $A_v$ exactly as retraction by
$M\cap\B_G(v,1)$. Thus, $\Prb_Y[A_v]\le q$ for every $v$: nondangerous events
satisfy this by definition, while dangerous events have all their
variables erased. Hence, dangerous events have probability at most $p\le q$.
Every event outside $\mathcal A$ is permanently unviolated: none of its variables
is erased, and $\mathcal A$ contains all violated events.

For fixed $U\subseteq\B_G(v,1)$,
$\E[\Prb_{\ret(X,U)}[A_v]]=\Prb[A_v]\le p$.
Markov's inequality and a union bound over at most $2^{d+1}$ choices
give $\Prb[v\in M]\le2^{d+1}p/q$.
Consequently, $\Prb[v\in\mathcal A]\le(d+1)2^{d+1}p/q\le(ed)^{-8}$
for sufficiently large absolute $C$.
The event that $v\in M$ is determined by $X_{V_v}$, so membership of $v$ in
$\mathcal A$ depends only on owners' randomness within $\B_G(v,2)$.
The membership events $v\in\mathcal A$ for vertices at pairwise distance
at least $4$ are mutually independent, since their variable sets
$\bigcup_{u\in\B_G(v,1)}V_u$ are pairwise disjoint.
By the standard shattering bound with radius parameter $2$,\footnote{We use the corrected argument in \cite[Corollary 1]{GHMN26}, replacing \cite[Theorem 5.1]{GHK18}. Here genuine two-hop randomness dependence gives the required joint-probability bound for $4$-independent sets.}
every component of $G[\mathcal A]$ has at most
$N=O(d^4\log n)=\poly(\log n)$ vertices with probability at least
$1-n^{-a}$.

Each residual LLL has degree at most $d$ and marginal probabilities
at most $q$, with $e(d+1)q<1/2$.
Any erased variable shared by two affected events joins them in
$G[\mathcal A]$, so different components share no erased variables.
Thus, \cite[Corollary 3.8]{RG20}, using size bound $N$, solves all
components in parallel in $\poly(\log\log n)$ rounds, allowing original IDs by
\cite[Remark 2.10]{RG20}. All preceding steps take $O(1)$ rounds.
\end{proof}

\end{document}